\title{Computing Hitting Set Kernels By AC$^0$-Circuits}
\author{Max Bannach \and Till Tantau}
\date{%
  Institute for Theoretical Computer Science,\\
  Universit\"at zu L\"ubeck\\
  L\"ubeck, Germany \\
  \texttt{\{bannach,tantau\}@tcs.uni-luebeck.de}
}
\theoremstyle{theorem}
\newtheorem{theorem}{Theorem}[section]
\newtheorem{corollary}[theorem]{Corollary}
\newtheorem{lemma}[theorem]{Lemma}
\theoremstyle{plain}
\newtheorem{example}[theorem]{Example}
\newtheorem{fact}[theorem]{Fact}
\newtheorem{problem}[theorem]{Problem}
\newtheorem{definition}[theorem]{Definition}
\newcommand\Class[1]{%
  \mathchoice%
  {\text{\normalfont\fontsize{9pt}{10pt}\selectfont$\mathrm{#1}$}}%
  {\text{\normalfont\fontsize{9pt}{10pt}\selectfont$\mathrm{#1}$}}%
  {\text{\normalfont$\mathrm{#1}$}}%
  {\text{\normalfont$\mathrm{#1}$}}%
}
\newcommand\Para{\mathrm{para\text-}}
\newcommand{\Lang}[1]{\text{\normalfont\textsc{#1}}}
\newcommand\PLang[1][]{p_{{#1}}\text-\penalty15\Lang}
\tikzset{
  hypergraph/.style={
    semithick
  },
  vertex/.pic={
    \fill [black!25, rounded corners=1mm] (0,0) rectangle (#1,1.5em);
    \node [anchor=mid west,font=\footnotesize] at (0,0.75em) {\tikzpictext};
    \coordinate (-1) at (4mm,0.75em);
    \coordinate (-1') at (4mm+4mm,0.75em);
    \coordinate (-2) at (4mm+6mm,0.75em);
    \coordinate (-3') at (4mm+12mm-4mm,0.75em);

    \coordinate (-4') at (2cm+4mm+4mm,0.75em);
    \coordinate (-5) at (2cm+4mm+6mm,0.75em);
    \coordinate (-6') at (2cm+4mm+12mm-4mm,0.75em);

    \coordinate (-7') at (4cm+4mm+4mm,0.75em);
    \coordinate (-8) at (4cm+4mm+6mm,0.75em);
    \coordinate (-9') at (4cm+4mm+12mm-4mm,0.75em);

    \coordinate (-bottom) at (2.5mm,0.8mm);
    \coordinate (-top) at (2.5mm,1.5em-0.8mm);
  },
  hyperedge/.style={very thick,cap=round},
  dot/.style={outer sep=0pt,fill,inner sep=0pt,minimum size=1mm,circle}
}
\begin{document}
\maketitle

\begin{abstract}
  Given a hypergraph $H = (V,E)$, what is the smallest subset $X
  \subseteq V$ such that $e \cap X \neq \emptyset$ holds for all $e \in E$?
  This problem, known as the \emph{hitting set problem,} is a
  basic problem in parameterized complexity theory. There are well-known
  kernelization algorithms for it, which get a hypergraph~$H$ and a
  number~$k$ as input and output a hypergraph~$H'$ such that (1)
  $H$ has a hitting set of size~$k$ if, and only if, $H'$ has such a
  hitting set and (2) the size of $H'$ depends only on $k$
  and on the maximum cardinality $d$ of edges in~$H$. The
  algorithms run in polynomial time, but are highly
  sequential. Recently, it has been shown that one of them can be parallelized
  to a certain degree: one can compute hitting set kernels in parallel
  time $O(d)$ -- but it was conjectured that this is
  the best parallel algorithm possible. We
  refute this conjecture and show how hitting set kernels can be
  computed in \emph{constant} parallel time. For our proof, we
  introduce a new, generalized notion of hypergraph sunflowers and
  show how iterated applications of the color coding technique can
  sometimes be collapsed into a single application.
\end{abstract}

\section{Introduction}

The hitting set problem is the following combinatorial problem: Given
a hypergraph $H = (V,E)$ as input, consisting of a set $V$ of vertices
and a set $E$ of \emph{hyperedges} with $e \subseteq V$ for all $e\in
E$, find a set $X\subseteq V$ of minimum size that ``hits'' all
hyperedges $e\in E$, that is, $e \cap X \neq \emptyset$. Many problems
reduce to the hitting set problem, including 
the vertex cover problem (it is exactly the special case where all
edges have size $|e| = 2$) and the dominating set problem (a dominating
set of a graph is exactly a hitting set of the hypergraph whose hyperedges
are the closed neighborhoods of the graph's vertices). The
computational complexity of the hitting set problem is thus of 
interest both in classical complexity theory and in parameterized complexity theory. 

The first result on the parameterized complexity of
the hitting set problem was an efficient \emph{kernelization algorithm}
for this problem restricted to edges of cardinality
three~\cite{NiedermeierR03}. This was later improved to a
kernelization for the $d$-uniform version (all hyperedges have size
exactly~$d$)~\cite{FlumG06}, which is based on the so-called Sunflower 
Lemma~\cite{ErdosR60}. We will later have a closer look at this
algorithm; at this point let us just summarize its main 
idea by ``repeatedly find sunflowers and replace them by their
cores until there are no more sunflowers.'' The Sunflower Lemma tells
us that this algorithm will stop only when the input graph has been
reduced to a kernel. 
The just-sketched kernelization algorithm is highly sequential, but Chen et
al.~\cite{ChenFH2017} have recently shown that it can be parallelized:
Instead of reducing sunflowers 
one-at-a-time, one can replace all sunflowers in a hypergraph by their
cores simultaneously in constant parallel time. This process only
needs to be repeated $d(H) = \max_{e\in E} |e|$ times, leading to a parallel algorithm running in
time $O(d(H))$. However, there were good reasons to believe that
this algorithm is essentially the best possible (we will later
discuss them) and Chen et al.\ conjectured that the
hitting set problem does not admit a kernelization algorithm running
in constant parallel time (that is, in time completely independent of
the input graph).

\paragraph*{Our Contributions.}

In the present paper we refute the conjecture of Chen et al.\ and show
that there is a constant parallel time kernelization algorithm for the
hitting set problem:

\begin{problem}{$\PLang[k,d]{hitting-set}$}\label{kd-hs}
  \begin{description}\parskip0pt\itemsep0pt
    \item[Instance:] A hypergraph $H=(V,E)$ and a number $k\in\mathbb{N}$.
    \item[Parameter:] $k + d(H)$
    \item[Question:] Does $H$ have a hitting set $X$ with $|X| \le k$?
  \end{description}
\end{problem}

\begin{theorem}[Main Theorem]\label{theorem-main}
  There is a \textsc{dlogtime}-uniform $\Class{AC}^0$-circuit family
  that maps every hypergraph $H = (V,E)$ and number $k$ to a new
  hypergraph $H' = (V,E')$ that has the same size-$k$ hitting sets as
  $H$, has $d(H') \le d(H)$, and has $|E'| \le f(k,d(H))$ for some
  fixed computable function~$f$.
\end{theorem}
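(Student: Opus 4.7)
My plan is to define $E'$ by a single local combinatorial property on subsets of $V$ of size at most $d(H)$ that can be checked in constant depth and captures the fixed point of iterated sunflower reduction in one shot.

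First, I would introduce the generalized sunflower announced in the abstract. Where a classical sunflower is a family of edges sharing a common core with pairwise disjoint petals, a \emph{generalized sunflower} should be a rooted structure of height at most $d(H)$ whose leaves are actual edges of $E$ and whose internal nodes carry cores of classical sunflowers formed by their children, with more than $k$ pairwise petal-disjoint children at every internal node. By iterating the classical sunflower argument from the leaves upward, the root of such a structure is a set that every size-$k$ hitting set of $H$ is forced to hit.

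Second, I would put into $E'$ precisely those $S \subseteq V$ that appear as the root of a generalized sunflower in $H$. Equivalence of size-$k$ hitting sets then splits into two implications: every $S \in E'$ is forced to be hit by the argument just sketched, and every edge $e \in E$ is dominated by some $S \in E'$ (so any hitting set of $E'$ hits all of $E$), which I would prove by structural induction over the sunflower tree. The size bound $|E'| \le f(k, d(H))$ should follow from a generalized sunflower lemma, proved by induction on $d$ in the spirit of Erd\H{o}s--Rado: a family of sets of size at most $d$ containing no generalized sunflower has cardinality bounded in terms of $k$ and $d$ alone.

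Third, the $\Class{AC}^0$ implementation uses color coding. A naive circuit would spend depth proportional to the height of the generalized sunflower, one level per depth, essentially reproducing the $O(d)$ algorithm of Chen et al. The collapsing trick hinted at by the abstract is to color-code all levels simultaneously: pick a single coloring of $V$ with $f(k,d)$ colors from a \textsc{dlogtime}-uniform $\Class{AC}^0$-computable splitter family, reserve disjoint color blocks for the cores and petals of every level in advance, and verify the whole tree by one constant-depth, bounded-fan-in existential quantification over which vertices occupy which color slots. Enumerating over a polynomial-size splitter family collapses the entire search into a single $\Class{AC}^0$ circuit.

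The main obstacle is establishing completeness of this collapsed color coding — that whenever a generalized sunflower exists, some coloring in the splitter family witnesses it in the simultaneous sense. Sequentially one could fix the coloring at a level only after inspecting the structure revealed below, whereas $\Class{AC}^0$ demands all choices up front. I expect to overcome this by choosing a palette large enough that $f(k,d)$-perfect hash families suffice, and by arranging the disjoint color blocks so that any deep witness can be relabeled into its designated slots independently of shallower choices, yielding the desired single-shot $\Class{AC}^0$-circuit.
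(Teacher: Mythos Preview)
Your high-level strategy matches the paper's: introduce a tree-shaped generalization of sunflowers whose root is forced to be hit, and detect it by a single collapsed color coding. Two concrete gaps remain, and they are exactly where the paper spends its technical effort.

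First, your definition of $E'$ and your size argument do not fit together. You set $E'$ to be the set of all generalized-sunflower roots and then invoke a lemma stating that a family \emph{containing no} generalized sunflower is small; but that bounds the complement, not $E'$ itself. If height-$0$ structures (single edges) count, so that ``every $e\in E$ is dominated by some $S\in E'$'' holds, then $E\subseteq E'$ and the bound fails outright; if they do not, domination fails for edges that sit in no sunflower. The paper resolves this tension differently: it organizes the pseudo-core hypergraphs $H'_0,\dots,H'_{d(H)}$ into a \emph{matryoshka sequence} (Definition~\ref{definition-matryoshka}) and takes the kernel to be $\bigcup_i (H'_i\ominus H'_{i+1})$, where each term is small because after removing everything containing a core of $H'_i$ no $(k{+}1)$-sunflower survives, so the classical Sunflower Lemma applies termwise (Theorem~\ref{theorem-kernel}).

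Second, and more fundamentally, your ``generalized sunflower'' carries an \emph{equality} constraint at every internal node (all children share exactly that node's core), and your hitting-set argument (``iterate the classical sunflower argument upward'') depends on precisely this. But a single color coding can only enforce \emph{disjointness}; what your collapsed coloring will actually witness is a weaker object in which each leaf independently partitions its edge along its root-to-leaf path, with disjointness imposed only at the level where two leaves' paths diverge. The paper makes this relaxation explicit under the name \emph{pseudo-sunflower} (Definition~\ref{def-pseudo-core}); the heart of the proof is then Lemma~\ref{lemma-p4}, a separate combinatorial argument showing that even the root of this relaxed structure is hit by every size-$k$ hitting set, which does \emph{not} follow by iterating the classical petal argument. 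Your final paragraph correctly flags completeness of the collapsed coloring as the obstacle, but the fix is not a larger palette or a clever block layout: it is this relaxation of the target structure, together with the auxiliary $\PLang[G]{restricted-coloring}$ device (Lemma~\ref{lemma-restricted-cc}) that handles vertices appearing in several branches with no mutual disjointness requirement.
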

Let us stress at this point that the $\Class{AC}^0$-family from the
theorem really has a size that is polynomial in the input length (no
exponential or even worse dependency on the parameters) and has a
depth that is completely independent of the input.  
The hypergraph $H'$ has the same vertex set~$V$ as~$H$ -- a
feature shared by all hypergraphs considered in this paper that
simplifies the presentation. However, since $V$ is still ``large,''
the circuit is not quite a kernelization algorithm. Fortunately, this is
easy to fix by replacing the vertex set of~$H'$ by $V'=\bigcup_{e\in
  E'} e$, yielding the following corollary: 
\begin{corollary}[Constant-Time Kernelization]\label{corollary-main}
  There is a \textsc{dlogtime}-uniform $\Class{AC}^0$-circuit family
  that computes a kernel for every instance for
  $\PLang[k,d]{hitting-set}$.   
\end{corollary}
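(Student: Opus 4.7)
My plan is to compose the $\Class{AC}^0$-circuit promised by Theorem~\ref{theorem-main} with a constant-depth postprocessing stage that restricts the vertex set to those vertices actually occurring in some edge of~$E'$. Concretely, given the output $H' = (V, E')$, I would compute $V' = \bigcup_{e \in E'} e$ and emit $(V', E')$. The size bound then comes for free: from $|E'| \le f(k, d(H))$ and $|e| \le d(H)$ for each $e \in E'$, I get $|V'| \le d(H) \cdot f(k, d(H))$, so the combinatorial size of $(V', E')$ is bounded by a computable function of the parameter $k + d(H)$, which is exactly what a kernelization requires.

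The only thing that really needs to be checked is that the postprocessing itself stays within $\Class{AC}^0$. For each input vertex~$v$, the bit indicating ``$v \in V'$'' is the $\mathrm{OR}$ taken over edges $e \in E'$ and positions $i \in \{1, \dots, d(H)\}$ of the equality predicate ``the $i$-th vertex of $e$ equals $v$''. This is a depth-three subcircuit of polynomial size, and stacking it on top of the circuit from Theorem~\ref{theorem-main} clearly preserves constant depth, polynomial size, and \textsc{dlogtime}-uniformity.

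Correctness should be routine: a size-$k$ hitting set of $(V', E')$ is also a size-$k$ hitting set of $(V, E')$, and conversely every size-$k$ hitting set of $(V, E')$ can be restricted to $V'$ without losing any hit edge, since no vertex of $V \setminus V'$ lies in any edge of $E'$. Combining this observation with the equivalence established by Theorem~\ref{theorem-main} gives that the size-$k$ hitting sets of $(V', E')$ coincide with those of~$H$. The one genuine subtlety I would flag rather than resolve is that I intend to keep the original labels on $V'$ instead of compactly renumbering it to $\{1, \dots, |V'|\}$: a renumbering would require computing ranks, which typically falls outside $\Class{AC}^0$, whereas the standard notion of a kernel merely demands a bounded-size combinatorial object and is agnostic to the labeling scheme. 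That is why I do not expect any real obstacle here beyond the bookkeeping already spelled out.
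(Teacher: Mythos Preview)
Your proposal is correct and follows exactly the approach the paper indicates: the paper's entire justification for the corollary is the sentence ``this is easy to fix by replacing the vertex set of~$H'$ by $V'=\bigcup_{e\in E'} e$,'' which is precisely what you do. Your write-up is in fact more careful than the paper's, since you spell out the $\Class{AC}^0$ implementation of computing~$V'$, verify the size bound $|V'|\le d(H)\cdot f(k,d(H))$, and explicitly flag the relabelling issue that the paper passes over in silence.
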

The theorem and corollary imply that all problems that can be reduced to
$\PLang[k,d]{hitting-set}$ via a parameter-preserving
$\Class{AC}^0$-reduction admit a kernelization computable by an
$\Class{AC}^0$-circuit family. This includes  $\PLang[k]{vertex-cover}$, which
is just $\PLang[k,d]{hitting-set}$ with $d$ fixed at~$2$;
$\PLang[k]{triangle-removal}$, where the objective is to remove at
most $k$ vertices from an undirected graph so that no triangles
remain; and also $\PLang[k,\mathrm{deg}]{dominating-set}$, where we
must find a dominating set of size at most~$k$ in an undirected graph
and we parametrize by $k$ and the maximum degree of the vertices. 

Our proof of the main theorem requires the development of two new
ideas, which we believe may also be useful in other situations.
The above-mentioned parallel kernelization algorithm for the hitting
set problem with runtime $O(d(H))$ essentially does the following: ``Repeat $d(H)$ times:
replace all sunflowers of size $k+1$ by their cores'' and the
difficult task in each of the $d(H)$ iterations is to find the
sunflowers. It turns out that this can be done in constant parallel
time using the \emph{color coding} technique~\cite{AlonYZ95} and it has been shown
in~\cite{BannachST15} and again in~\cite{ChenFH2017} that
this technique can be implemented in constant time. Our first idea for
turning the circuits depth from $O(d)$ into $O(1)$ is to
\emph{collapse the color codings from the $d$ rounds into a single
  application of the color coding technique:} 
Instead of applying color coding in each round to filter and describe
``objects,'' we would like to apply one global application of color
coding that already contains the internal colorings and does away
with the intermediate objects.

Unfortunately, there does not appear to be a simple (or any) way of
actually collapsing the colorings used when we ``replace
all sunflowers by their cores'': The coloring coding technique is good
at imposing requirements of the form ``these objects must be
disjoint,'' but cannot impose requirements of the form ``these objects
must be the same.'' For this reason, as our second new idea, we
develop a generalization of the notion of a sunflower (which we dub
``pseudo-sunflowers'') that is tailored to the collapsing of color
coding.  

\paragraph*{Related Work.}

The sequential kernelization algorithm for the hitting set problem
based on the Sunflower Lemma  has been known for a longer
time~\cite{FlumG06}, but there have been recent improvements
that bring down the runtim to linear time~\cite{vanBevern2014}. A
parallel version has recently been studied by Chen 
et al.~\cite{ChenFH2017} and they show how kernels for
$\PLang[k,d]{hitting-set}$ can be computed by circuits of
depth~$O(d(H))$. Chen et al.\ also conjecture that the circuit depth of
$O(d(H))$ is unavoidable (which we refute).

The results of this paper fit into the larger, fledgling field of
parallel parameterized complexity theory, which has already been
studied both 
from a practical~\cite{AbuKhzamLSS2006} and a theoretical point of
view~\cite{Cesati:1998fe}. First results go back to
research on \emph{parameterized logarithmic space}~\cite{CaiCDF97,
  ChenFG2003, FlumG2002}, since it is known from classical complexity
theory 
that problems that are solvable with such a resource bound can also be parallelized.
A more structured analysis of parameterized space and circuit
classes was later made by Elberfeld et.~al~\cite{ElberfeldST2014},
which addresses parallelization more directly.
Current research on parameterized parallelization -- including this
paper -- focuses on constant-time computations, that is, on a
parameterized analogue of $\Class{AC^0}$~\cite{ChenF16, ChenFH2017,
  BannachST15, BannachT16}. We remark that many previous results
(including several of the authors) boil down to showing that instead of
using a known reduction rule many 
times sequentially, one can simply apply it in parallel
``everywhere,'' but ``only once.'' In contrast, the kernelization
algorithm developed in the present paper had no previous counterpart
in the sequential setting. 

\paragraph*{Organization of This Paper.}

After a short section on preliminaries, in Section~\ref{section-known}
we review known kernelization algorithms for the hitting set problem
-- both the sequential ones and the parallel one. In
Section~\ref{section-pseudo} we discuss the 
obstacles that must be surmounted to turn the known parallel algorithm
into one that needs only constant time. Towards this aim, we introduce
the notions of pseudo-cores and pseudo-sunflowers as replacements for the
cores and sunflowers used in the known algorithms. In
Section~\ref{section-algorithm} we then argue that these
pseudo-sunflowers can be computed in constant time by
``collapsing'' multiple rounds of color coding into a single
round. Full proofs can be found in the full version of the
paper~\cite{BannachT2018}.

\section{Preliminaries}

A \emph{hypergraph} is a pair $H = (V,E)$ such that for all \emph{hyperedges
  $e \in E$} we have $e \subseteq V$. We write $V(H) = V$ and $E(H) =
E$ for the vertex and hyperedge sets of~$H$. Let $d(H) = \max_{e\in E}
|e|$. Throughout this paper, all
hypergraphs will always have the same vertex set~$V$, which is the
input vertex set. For this reason, in slight abuse of notation, for
two hypergraphs $H_1 = (V,E_1)$ and $H_2 = (V,E_2)$ we also write $H_1
\subseteq H_2$ for $E(H_1) \subseteq E(H_2)$ and $H_1 \cup H_2$ for
$(V, E(H_1) \cup E(H_2))$.

Concerning circuit classes and parallel computations, we will only
need the notion of $\Class{AC}$-circuit families, which are sequences
$C=(C_0,C_1,C_2,\dots)$ of Boolean circuits where each $C_i$ is a
directed acyclic graph whose vertices are gates such that there are $i$
input gates, the inner gates are $\land$-gates or $\lor$-gates with unbounded fan-in,
or $\lnot$-gates; and the number
of output gates is either $1$ (for decision problems) or depends on
the number of input gates (for circuits computing a function). The
\emph{size function~$S$} maps circuits to their size (number of gates)
and the \emph{depth function~$D$} maps them to their depth (longest
path from input gates to output gates). When $D(C_n) \in O(1)$ and
$S(C_n) \in n^{O(1)}$ hold, we call $C$ an 
$\Class{AC}^0$-circuit family. Concerning circuit uniformity, all circuit
families in this paper will be \textsc{dlogtime} uniform, which is 
the strongest notion of uniformity commonly
considered~\cite{BarringtonIS88}: there is a \textsc{dtm} that on
input of $\operatorname{bin}(i)\#\operatorname{bin}(n)$, 
where $\operatorname{bin}(x)$ is the binary encoding of~$x$, outputs
the $i$th bit of a suitable encoding of $C_{n}$ in at most $O(\log n)$
steps.

Even though this paper is about a parallel kernelization algorithm, we
will need only little from the machinery of parallel parameterized
complexity theory. We do need the following notions: A
\emph{parameterized problem} is a pair $(Q,\kappa)$ where $Q \subseteq
\Sigma^*$ is a language and $\kappa$ is a function $\kappa \colon
\Sigma^* \to \mathbb N$ that is computable by a
\textsc{dlogtime}-uniform $\Class{AC}^0$-circuit family. When we write
down a parameterized problem such as 
$\PLang[k,d]{hitting-set}$, the indices of ``$p$'' (for
``parameterized'') indicate which 
parameter function $\kappa$ we mean. A \emph{kernelization}
for a parameterized problem $(Q,\kappa)$ is a function~$K$ that maps
every instance $x \in \Sigma^*$ to a new instance $K(x) \in \Sigma^*$
such that for all $x\in \Sigma^*$ we have  (1)~$x \in Q \iff K(x) \in Q$ and
(2)~$|K(x)| \le f(\kappa(x))$ for some fixed computable function~$f$.

A parameterized problem $(Q,\kappa)$ lies in
$\Class{FPT}$ if $x\in Q$ can be decided by a sequential algorithm running
in time $f(\kappa(x))\cdot|x|^{O(1)}$ for a computable
function~$f$. The $\Class{AC}^0$-analogue of $\Class{FPT}$ is
the class $\Para\Class{AC}^0$. It contains all problems
$(Q,\kappa)$ for which there is a circuit family
$(C_{n,k})_{n,k\in\mathbb N}$ such that for all inputs $x$ we have
$C_{|x|,\kappa(x)}(x) = 1$ if, and only if, $x\in 
Q$, and  $D(C_{n,k}) \in O(1)$ and $S(C_{n,k}) \in f(k)\cdot
n^{O(1)}$. It is well-known that $(Q,\kappa) \in \Class{FPT}$ holds
if, and only if, $Q$ is decidable and there is a kernelization for $(Q,\kappa)$ that is 
computable in polynomial time. The same proof as for the
polynomial-time case also shows that we have $(Q,\kappa)
\in \Para\Class{AC}^0$ if, and only if, $Q$ is decidable and $(Q,\kappa)$ has a
kernelization that can be computed by an $\Class{AC}^0$-circuit
family. (We stress once more that this means that the kernelization is
a normal $\Class{AC}^0$-circuit family, having size $S(C_n) \in n^{O(1)}$.)

We will use the \emph{color coding technique} a lot. First introduced
in~\cite{AlonYZ95}, it has recently been shown to work in the context
of constant time computations~\cite{BannachST15,ChenFH2017}. The key
observation underlying this technique is the following: Suppose we are
given a set of $n$ elements and suppose you have $k$ special elements
$x_1,\dots,x_k$ together with some specific colors $c_1,\dots,c_k$ for
them ``in mind''. Then we can compute a set~$\Lambda$ of ``candidate
colorings'' of all elements of the set such that at least one
$\lambda\in \Lambda$ colors each ``in mind'' vertex~$x_i$ with the
``desired'' color $c_i$, that is $\lambda(x_i) = c_i$. Formally, the
following holds (the original  version of this lemma due to Alon
et.\,al~\cite{AlonYZ95} is equivalent to the statement below -- only
without any depth guarantees):  

\begin{fact}[Color Coding Lemma, \cite{BannachST15}]\label{fact-cc}
  There is a \textsc{dlogtime}-uniform family
  $(C_{n,k,c})_{n,k,c\in\mathbb N}$ of $\Class{AC}$-circuits without
  inputs such that each $C_{n,k,c}$  
  \begin{enumerate}
  \item outputs a set $\Lambda$ of functions $\lambda \colon
    \{1,\dots,n\} \to \{1,\dots,c\}$ (coded as a
    sequence of function tables) with the property that for any $k$
    mutually distinct $x_1,\dots,x_k \in \{1,\dots,n\}$ and any $c_1,\dots,c_k
    \in \{1,\dots,c\}$ there is a function $\lambda\in \Lambda$ with $\lambda(x_i) =
    c_i$ for all $i \in \{1,\dots,k\}$, 
  \item has constant depth (independent of $n$, $k$, or $c$), and
  \item has size at most $O(\log c \cdot c^{k^2} \cdot k^4 \cdot n\log^2 n)$.
  \end{enumerate}
\end{fact}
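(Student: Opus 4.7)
The plan is to reduce the construction to a classical perfect hashing ingredient and then ``postcompose'' with all possible colorings. Specifically, I would build~$\Lambda$ in two stages. First, I would take a dlogtime-uniform $(n,k)$-perfect hash family $H$ consisting of functions $h \colon \{1,\dots,n\} \to \{1,\dots,k^2\}$ with the property that for every $k$-element subset $S \subseteq \{1,\dots,n\}$ some $h \in H$ is injective on~$S$. The Fredman--Koml\'os--Szemer\'edi style constructions based on low-degree polynomials over small finite fields give such a family of size $|H| = O(k^4 \log n)$, and the description is explicit enough that the relevant bits can be extracted by a DTM in $O(\log n)$ steps. Second, I would let $F$ be the \emph{full} set of $c^{k^2}$ functions $f \colon \{1,\dots,k^2\} \to \{1,\dots,c\}$, enumerated in some canonical lexicographic order. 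I then set $\Lambda := \{\,f \circ h \mid h \in H,\ f \in F\,\}$.

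Property~(1) is then a short verification. Given distinct $x_1,\dots,x_k$ and desired colors $c_1,\dots,c_k$, the $k$-perfect property yields some $h \in H$ injective on $\{x_1,\dots,x_k\}$; the $c^{k^2}$ functions in $F$ include one that sends $h(x_i) \mapsto c_i$ for every~$i$ (defined arbitrarily elsewhere), so the composition $f \circ h$ lies in $\Lambda$ and realizes the prescribed assignment. The key design choice here is that postcomposing with \emph{all} functions on the small codomain $[k^2]$ lets us freely prescribe the color on each $x_i$ once injectivity has separated them.

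For property~(2), observe that the circuit has no inputs and merely outputs a fixed bit string encoding $\Lambda$ as a sequence of function tables. Hence every output gate is a constant and the depth is trivially constant. For property~(3), I would argue as follows: $|\Lambda| \le |H|\cdot|F| = O(k^4 \log n \cdot c^{k^2})$, each $\lambda \in \Lambda$ is a table of $n$ values of $\log c$ bits, and the bookkeeping for indexing gives the extra logarithmic factor, yielding total size $O(\log c \cdot c^{k^2} \cdot k^4 \cdot n \log^2 n)$.

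The main obstacle I anticipate is the dlogtime-uniformity requirement, rather than the combinatorics. Producing a single bit of $\Lambda$ in $O(\log n)$ steps requires decoding the index into a pair $(h,f,j)$, then evaluating $f \circ h$ on argument~$j$, which in turn requires evaluating a polynomial hash over a finite field of size $\mathrm{poly}(k,\log n)$. Iterated multiplication is \emph{not} in $\mathrm{AC}^0$ in general, but here both the field size and the polynomial degree are bounded by functions of $k$ and $\log n$, so the relevant arithmetic is small enough to be simulated inside the $O(\log n)$ DTM budget. This is the step I would spend the most care on, and it is the reason I would lean on an explicit construction whose bit-level complexity is already documented (e.g.\ the one in~\cite{BannachST15}) rather than a more opaque existential family.
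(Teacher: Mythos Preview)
The paper does not give its own proof of this statement: it is recorded as a \emph{Fact} and cited from~\cite{BannachST15}, with no accompanying proof environment. So there is nothing to compare your proposal against in this paper; your sketch is a reconstruction of the argument from the cited source.

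That said, your reconstruction is essentially the standard one and matches the approach taken in~\cite{BannachST15}: take an explicit $(n,k)$-perfect hash family into a domain of size polynomial in~$k$, then postcompose with \emph{all} maps from that small domain into~$\{1,\dots,c\}$. Your verification of property~(1) is correct, and you rightly identify \textsc{dlogtime}-uniformity as the only place where real work is needed. One minor quibble: your size arithmetic as written yields $O(\log c \cdot c^{k^2} \cdot k^4 \cdot n \log n)$, not $n\log^2 n$; the extra $\log n$ in the stated bound comes from the specific encoding and uniform evaluation machinery in the cited construction, not from generic ``bookkeeping,'' so if you want to match the bound exactly you should point to that rather than wave at indexing overhead.
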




\section{Known Kernelization Algorithms for the Hitting Set Problem}
\label{section-known}

\subsection{Known Sequential Kernelization Algorithms}

The knwon sequential kernelization algorithms for
$\PLang[k,d]{hitting-set}$ are based on the so-called
\emph{Sunflower Lemma.} The perhaps simplest application of this lemma
is to repeatedly collapses sufficiently large sunflowers to their
cores until there are no longer any large sunflowers in the graph and,
then, the Sunflower Lemma tells us that the graph ``cannot be 
very large.'' In detail, the definitions and algorithm are as
follows:

\begin{definition}[Sunflower]
  A \emph{sunflower $S$ with core $C$} is a set of proper supersets
  of~$C$ such that for 
  any two distinct $p,q \in S$  we have $p \cap q = C$. The elements
  of a sunflower are called \emph{petals.} A \emph{sunflower in a
    hypergraph} is a sunflower whose petals are hyperedges of the
  hypergraph. 
\end{definition}


\begin{fact}[Sunflower Lemma~\cite{ErdosR60}]\label{fact-sunflower}
  Every hypergraph $H$ with more than $k^{d(H)}\cdot
  d(H)!$ hyperedges contains a sunflower of size~$k+1$.
\end{fact}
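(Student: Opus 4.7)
The plan is to prove the Sunflower Lemma by induction on $d = d(H)$, the maximum hyperedge size. In the base case $d = 1$, every nonempty hyperedge is a singleton (an empty edge, if present, can be set aside), so more than $k^1 \cdot 1! = k$ singletons yield $k+1$ pairwise disjoint ones, which form a sunflower of size $k+1$ with empty core.

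For the inductive step, I would fix $d \geq 2$ and assume the result for all hypergraphs of maximum edge size at most $d-1$. Given $H$ with $|E(H)| > k^{d} \cdot d!$, first pick a \emph{maximal} collection $\mathcal{M}$ of pairwise disjoint hyperedges of $H$. If $|\mathcal{M}| \geq k+1$, we are done: any $k+1$ pairwise disjoint edges form a sunflower with empty core. Otherwise $|\mathcal{M}| \leq k$ and the union $U = \bigcup_{e \in \mathcal{M}} e$ satisfies $|U| \leq kd$. By maximality, every hyperedge of $H$ must intersect $U$, so the pigeonhole principle supplies some vertex $v \in U$ contained in more than $\frac{k^{d} \cdot d!}{kd} = k^{d-1} \cdot (d-1)!$ hyperedges.

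I would then pass to the auxiliary hypergraph $H'$ whose edge set is $\{\, e \setminus \{v\} : e \in E(H),\ v \in e \,\}$. These edges are pairwise distinct, since two supersets of $\{v\}$ agreeing after removal of $v$ must coincide; each has size at most $d-1$; and there are more than $k^{d-1} \cdot (d-1)!$ of them. The inductive hypothesis then yields a sunflower of size $k+1$ in $H'$ with petals $P_1,\dots,P_{k+1}$ and some core $C$, and re-adding $v$ produces the sunflower $\{\,P_i \cup \{v\}\,\}_{i=1}^{k+1}$ in $H$ with core $C \cup \{v\}$. The main delicate point is purely the arithmetic: the factor $d!$ in the threshold is tuned precisely so that after dividing by the estimate $|U| \leq kd$ we recover exactly the inductive threshold $k^{d-1}(d-1)!$; any looser counting would break the induction, so the only real care required is in tracking this constant.
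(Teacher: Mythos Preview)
The paper does not give its own proof of this statement; it is quoted as a known fact from the literature (Erd\H{o}s and Rado, 1960) and used as a black box. Your argument is correct and is exactly the classical Erd\H{o}s--Rado induction on~$d$: take a maximal pairwise-disjoint subfamily, pigeonhole onto a vertex of its union lying in many edges, delete that vertex, and apply the inductive hypothesis to the resulting smaller-edge family.
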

The importance of the Sunflower Lemma for the hitting set problem lies
in the following observation: Suppose a hypergraph $H$ contains a
sunflower $S$ of size at least $k+1$. Then $H$ has a size-$k$ hitting
set if, and only if, the hypergraph obtained from $H$ by removing all
petals of the sunflower and adding its core has such a hitting set (we
cannot hit the $k+1$ petals in the sunflower using only $k$ vertices
without using at least one vertex of the core; thus, we hit all petals
if, and only if, we hit the core). In other words, replacing a
sunflower of size $k+1$ by its core is a reduction rule for the
hitting set problem; and if we can no longer apply this rule, the
Sunflower Lemma tells us that the hypergraph's size is bounded by a
function that depends only on $k$ and $d(H)$~--~in other words, it is
a kernel.

The just-described kernelization algorithm is simple, but ``very
sequential.'' It is, however, not too difficult to turn it into a
more parallel algorithm -- at least, as long as $d(H)$ is fixed. This
was first noted by Chen et al.~\cite{ChenFH2017} and we explain
the ideas behind their proof below, rephrased for the
purposes of the present paper.

A better sequential kernelization algorithm has
recently~\cite{vanBevern2014} been proposed (it runs in time
$O(2^{d(H)} |E|)$, which is linear from a parameterized point of view)
-- but the algorithm is arguably ``even more sequential''
and does not lend itself to easy parallelization.

\subsection{Known Parallel Kernelization Algorithm}

The first step towards a parallel kernelization is the
observation that we can compute many cores in parallel. Given a
hypergraph $H = (V,E)$ and a number~$k$, let a \emph{$k$-core in~$H$} be a
core~$C$ of a sunflower in~$H$ with more than
$k$~petals. Let
$\operatorname{\mathit k-cores}(H) = \bigl(V, \{C \mid \text{$C$ is a
  $k$-core in $H$}\}\bigr)$. While in the sequential algorithm we
always replace one sunflower by its core, we now replace \emph{all}
sunflowers by their cores. This leaves behind some hyperedges, but the
Sunflower Lemma will show that their number is ``small.''
Unfortunately, the set of cores itself may still be
large and we need to apply the replace-all-sunflowers-by-cores
operation repeatedly.  This process \emph{does} stop after at most $d(H)$
rounds since the \emph{size} of the cores decreases by $1$ in each
round and, hence, after $d(H)$ rounds it has shrunk to~$0$.

\def\vertices{
  \scoped[yscale=.8]{
    \pic (a) ["$a$"] at (0,0)      {vertex = 5.7cm};
    \pic (b) ["$b$"] at (0,0.75)   {vertex = 5.7cm};
  }
  \verticesup
}
\def\verticesup{
  \scoped[yscale=.8]{
    \pic (c) ["$c$"] at (0,1.9)      {vertex = 1.7cm};
    \pic (d) ["$d$"] at (2,1.9)      {vertex = 1.7cm};
    \pic (e) ["$e$"] at (4,1.9)      {vertex = 1.7cm};

    \pic (f) ["$f$"] at (0,3)    {vertex = 5mm};
    \pic (g) ["$g$"] at (0.6,3)  {vertex = 5mm};
    \pic (h) ["$h$"] at (1.2,3)  {vertex = 5mm};

    \pic (i) ["$i$"] at (2+0,3)    {vertex = 5mm};
    \pic (j) ["$j$"] at (2+0.6,3) {vertex = 5mm};
    \pic (k) ["$k$"] at (2+1.2,3)  {vertex = 5mm};

    \pic (l) ["$l$"] at (4+0,3)    {vertex = 5mm};
    \pic (m) ["$m$"] at (4+0.6,3) {vertex = 5mm};
    \pic (n) ["$n$"] at (4+1.2,3)  {vertex = 5mm};

    \pic (o) ["$o$"] at (2+0,4)    {vertex = 5mm};
    \pic (p) ["$p$"] at (2+0.6,4) {vertex = 5mm};
    \pic (q) ["$q$"] at (2+1.2,4)  {vertex = 5mm};

    \pic (r) ["$r$"] at (2+0,4.75)    {vertex = 5mm};
    \pic (s) ["$s$"] at (2+0.6,4.75) {vertex = 5mm};
    \pic (t) ["$t$"] at (2+1.2,4.75)  {vertex = 5mm};

    \pic (u) ["$u$"] at (2+0,5.5)    {vertex = 5mm};
    \pic (v) ["$v$"] at (2+0.6,5.5) {vertex = 5mm};
    \pic (w) ["$w$"] at (2+1.2,5.5)  {vertex = 5mm};
    }
  }

  \def\hyperedgesh{
    \draw [hyperedge] (u-top) -- (v-top) -- (w-top);
    
    \draw [hyperedge,r] (a-1') -- (b-1') -- (c-1') to[out=90,in=-90] (f-1) to [out=90,in=180] (u-bottom) -- (v-bottom) -- (w-bottom);
    \draw [hyperedge,r] (a-2) -- (b-2) -- (c-2) to[out=90,in=-90] (g-1) to [out=90,in=180] (r-bottom) -- (s-bottom) -- (t-bottom)
    to[out=0,in=90] ([yshift=1mm]m-1);
    \draw [hyperedge,r] (a-3') -- (b-3') -- (c-3') to[out=90,in=-90] (h-1) to [out=90,in=180] (o-bottom) -- (p-bottom) -- (q-bottom)
    to[out=0,in=90] ([xshift=-1mm]l-1) -- ([xshift=-1mm]e-1);

    \draw [hyperedge,b] (a-4') -- (b-4') -- (d-1') to[out=90,in=-90] (i-1) -- (o-1) -- (r-1) -- (u-1);
    \draw [hyperedge,b] (a-5) -- (b-5) -- (d-2) to[out=90,in=-90] (j-1) -- (p-1) -- (s-1) -- (v-1);
    \draw [hyperedge,b] (a-6') -- (b-6') -- (d-3') to[out=90,in=-90] (k-1) -- (q-1) -- (t-1) -- (w-1);

    \draw [hyperedge,g] (a-7') -- (b-7') -- (e-1') to[out=90,in=-90] (l-1);
    \draw [hyperedge,g] (a-8) -- (b-8) -- (e-2) to[out=90,in=-90] (m-1);
    \draw [hyperedge,g] (a-9') -- (b-9') -- (e-3') to[out=90,in=-90] (n-1);
  }

  \def\hyperedgeshup{
    \draw [hyperedge,r] (c-1') to[out=90,in=-90] (f-1) to [out=90,in=180] (u-bottom) -- (v-bottom) -- (w-bottom);
    \draw [hyperedge,r] (c-2) to[out=90,in=-90] (g-1) to [out=90,in=180] (r-bottom) -- (s-bottom) -- (t-bottom)
    to[out=0,in=90] ([yshift=1mm]m-1);
    \draw [hyperedge,r] (c-3') to[out=90,in=-90] (h-1) to [out=90,in=180] (o-bottom) -- (p-bottom) -- (q-bottom)
    to[out=0,in=90] ([xshift=-1mm]l-1) -- ([xshift=-1mm]e-1);

    \draw [hyperedge,b] (d-1') to[out=90,in=-90] (i-1) -- (o-1) -- (r-1) -- (u-1);
    \draw [hyperedge,b] (d-2) to[out=90,in=-90] (j-1) -- (p-1) -- (s-1) -- (v-1);
    \draw [hyperedge,b] (d-3') to[out=90,in=-90] (k-1) -- (q-1) -- (t-1) -- (w-1);

    \draw [hyperedge,g] (e-1') to[out=90,in=-90] (l-1);
    \draw [hyperedge,g] (e-2) to[out=90,in=-90] (m-1);
    \draw [hyperedge,g] (e-3') to[out=90,in=-90] (n-1);
  }

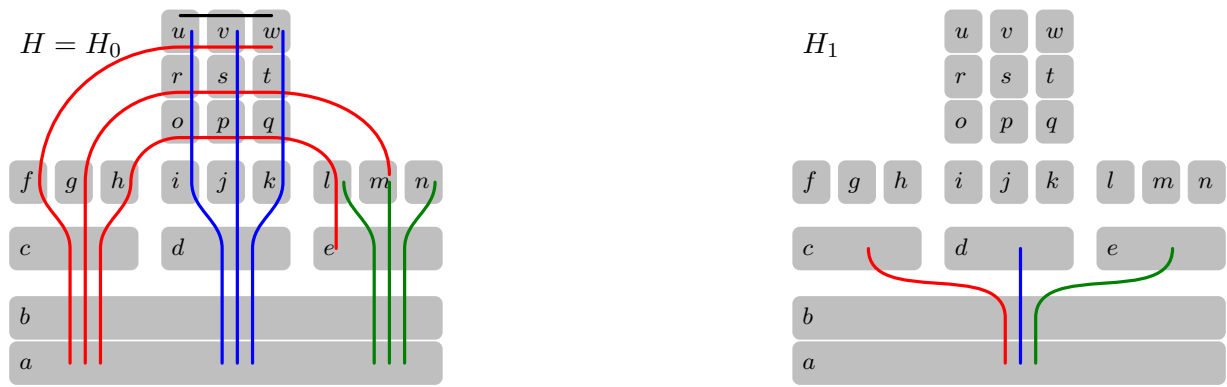
\begin{figure}[htpb]
  
  \begin{tikzpicture}

    \node [below right] at (0mm,4.8cm) {$H=H_0$};
    
    \vertices
    \scoped[r/.style=red,g/.style=green!50!black,b/.style=blue]{
      \hyperedgesh
    }

  \end{tikzpicture}
  \hfill  
  \begin{tikzpicture}

    \node [below right] at (0mm,4.8cm) {$H_1$};
    
    \vertices
    
    \draw [hyperedge,red] (a-4') -- (b-4') to[out=90,in=-90] (c-2);
    \draw [hyperedge,blue] (a-5) -- (b-5) to[out=90,in=-90] (d-2);
    \draw [hyperedge,green!50!black] (a-6') -- (b-6') to[out=90,in=-90] (e-2);
  \end{tikzpicture}
  
  \caption{
    Visualization of a hypergraph $H_0$ and of its 2-cores $H_1 =
    \operatorname{2-cores}(H_0)$. Vertices are drawn as 
    rectangles, while the ten hyperedges of~$H_0$
    are drawn as lines: they contain all vertices that they touch. For instance,
    the leftmost line starting in the vertex~$a$ in $H_0$ visualizes
    the hyperedge $\{a,b,c,f,u,v,w\}$ and the rightmost line
    visualizes the hyperedge $\{a,b,e,n\}$. The hypergraph $H_0$
    contains three sunflowers of size~$3$, visualized by the red,
    blue, and green lines, respectively. Their cores are the
    hyperedges shown in $H_1$. These cores, in turn, form a sunflower
    in~$H_1$ with core $\{a,b\}$, but note that $\{a,b\}$ is
    \emph{not} a 2-core of~$H_0$. It is the only hyperedge of~$H_2$.
  }
  \label{fig:example1}
\end{figure}

Let us now formalize these ideas a bit: Let $H_0 = H$ and let $H_{i+1}
= \operatorname{\mathit k-cores}(H_i)$. Then $H_0$ is the original
hypergraph; $H_1$ is the set of its $k$-cores; $H_2$ is the set of
$H_1$'s $k$-cores and thus the set of ``cores of cores'' of~$H$; next
$H_3$ is the set of ``cores of cores of cores'' of~$H$; and so
on, see Figure~\ref{fig:example1} for an example. In a sense, each $H_i$ is nested into the previous hypergraph,
leading to a whole sequence resembling a matryoshka doll. Below, we
define a \emph{matryoshka sequence} as a sequence that has this ``nested
in some sense'' property and then show in
Lemma~\ref{lemma-core-properties} that $(H_0,H_1,\dots)$ is, indeed,
such a matryoshka sequence: 

\begin{definition}[Matryoshka Sequence]\label{definition-matryoshka}
  A \emph{matryoshka sequence for a hypergraph $H = (V,E)$ and a
    number~$k$} is a sequence $(M_0, M_1, \dots, M_{d(H)})$ of
  hypergraphs, all of which have the same vertex set~$V$, with the
  following properties for all $i\in\{0,\dots,d(H)\}$:
  \begin{enumerate}
  \item $M_0 = H$,
  \item $d(M_i) \le d(H) - i$,
  \item $\operatorname{\mathit k-cores}(M_i) \subseteq M_{i+1}$, and
  \item every size-$k$ hitting set of~$H$ is also a hitting set
    of~$M_i$. 
  \end{enumerate}
\end{definition}

\begin{lemma}[Cores of Cores Form a Matryoshka Sequence]\label{lemma-core-properties}
  For every hypergraph~$H$ and number~$k$, the sequence
  $(H_0,\dots,H_{d(H)})$ is a matryoshka sequence for~$H$ and~$k$.  
\end{lemma}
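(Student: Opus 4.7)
The plan is to verify the four conditions of Definition~\ref{definition-matryoshka} for the sequence $(H_0,\ldots,H_{d(H)})$ one by one. Conditions (1) and (3) are essentially bookkeeping: $M_0 = H_0 = H$ holds by definition, and $\operatorname{\mathit k\text-cores}(M_i) \subseteq M_{i+1}$ in fact holds with equality by the recursive definition $H_{i+1} = \operatorname{\mathit k\text-cores}(H_i)$. So the real content is in conditions (2) and (4), both of which I would prove by induction on $i$.

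For condition (2), the key observation is that cores are \emph{proper} subsets of their petals. I would proceed by induction: the base case $d(H_0) = d(H)$ is immediate. For the step, any hyperedge of $H_{i+1}$ is a core~$C$ of some sunflower in $H_i$; by the sunflower definition, $C$ is a proper subset of each petal~$p$, so $|C| \le |p| - 1 \le d(H_i) - 1$, and the induction hypothesis gives $d(H_i) \le d(H) - i$, yielding $|C| \le d(H) - (i+1)$. Maximizing over hyperedges of $H_{i+1}$ gives $d(H_{i+1}) \le d(H) - (i+1)$, as required.

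For condition (4), which is the genuinely interesting one and where the hitting-set combinatorics enters, I would again induct on $i$. The base case is trivial since $M_0 = H$. For the step, suppose $X$ is a size-$k$ hitting set of $H$ and, by the induction hypothesis, also of~$H_i$. Any hyperedge of $H_{i+1}$ is the core $C$ of some sunflower $S \subseteq E(H_i)$ with $|S| \ge k+1$. Each petal $p \in S$ is a hyperedge of $H_i$, so $X$ must hit every petal. The crucial combinatorial fact is that distinct petals share only the vertices of $C$, so for each petal $p$ the set $p \setminus C$ is disjoint from the corresponding set for any other petal. Thus if $X$ were disjoint from $C$, the hitting witnesses for the $k+1$ petals would have to come from $k+1$ pairwise disjoint sets $p\setminus C$, which forces $|X| \ge k+1$, contradicting $|X| \le k$. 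Hence $X \cap C \neq \emptyset$ and $X$ hits the hyperedge $C$ of $H_{i+1}$, completing the induction.

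I do not anticipate any serious obstacle here: everything follows from the definitions together with the standard sunflower argument for hitting sets that was already sketched informally in the prose preceding Definition~\ref{definition-matryoshka}. The only point worth stating carefully in the final write-up is the disjointness of the ``petal remainders'' $p \setminus C$, since that is what converts the hypothesis $|S| > k$ into the desired contradiction.
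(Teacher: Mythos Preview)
Your proposal is correct and follows essentially the same approach as the paper's proof: properties (1) and (3) are dispatched by definition, property (2) via the observation that cores are strictly smaller than their petals (the paper phrases this without an explicit induction, but the content is identical), and property (4) via the same induction using the pairwise disjointness of the sets $p\setminus C$ to derive a contradiction with $|X|\le k$. There is nothing to add.
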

\begin{proof}
  The first property of a matryoshka sequence is true by
  definition. The second property follows since each time we apply the operator
  $\operatorname{\mathit  
    k-cores}$ to a hypergraph, the maximum size of the hyperedges
  decreases by at least~$1$ (cores are smaller than the largest petals
  of their sunflowers). Since we start at $d(H)$, we get $d(H_i) \le
  d(H)-i$. For the third property we actually even have equality here
  by definition. The last property is proved by induction on~$i$. The
  case $i=0$ is trivial, so consider a hitting set $X$ of size~$k$ for
  $H_{i-1}$ and consider a $k$-core $C$ of $H_{i-1}$. By definition
  there must be a sunflower $\{e_1,\dots,e_{k+1}\} \subseteq
  E(H_{i-1})$ with core~$C$. If $X$ did not hit $C$ (that is, if $X
  \cap C = \emptyset$), then the size-$k$ set  $X$ would
  have to hit all of the $k+1$ pairwise disjoint sets $e_i - C$ for $i
  \in  \{1,\dots,k+1\}$, which is impossible. 
\end{proof}

Recall that the idea behind the parallel computation of a kernel for
the hitting set problem is to repeatedly remove all
sunflowers from $H$, each time perhaps leaving a
manageable number of hyperedges~--~and after $d$ rounds, no hyperedges will
remain. We use the following notation for the ``removal'' operation:
For two hypergraphs $H = (V,E)$ and $H' = (V,E')$ let $H \ominus H' =
\bigl(V, \{\,e \in E \mid \forall e' \in E' \colon e' \not\subseteq
e\,\}\bigr)$, that is, we remove all hyperedges from $H$ that contain a
hyperedge of $H'$. Thus, $H \ominus H_1$ is the set of all hyperedges
in $H$ that are not involved in any sunflower of size at least $k+1$
since we remove all edges that contain a core.

The following theorem shows that the repeated removing operation only
leaves behind a ``small'' number of hyperedges. We formulate the
theorem for arbitrary matryoshka sequences (we will need this later
on), but it is best to think of the $M_i$ as the sets~$H_i$.

\begin{theorem}[Kernel Theorem]\label{theorem-kernel}
  Let $(M_0,\dots,M_{d(H)})$ be a matryoshka sequence for $H$
  and~$k$. Let 
  {\(
    K = (M_0 \ominus M_1) \cup (M_1 \ominus M_2) \cup (M_2 \ominus
    M_3) \cup \dots \cup (M_{d(H)-1} \ominus M_{d(H)}) \cup M_{d(H)}.
  \)}
  {\begin{enumerate}
  \item 
    Then $K$ has at most $\sum_{i=0}^{d(H)} k^i i!$ hyperedges and
  \item
    $H$ and $K$ have the same size-$k$ hitting sets.
  \end{enumerate}}
\end{theorem}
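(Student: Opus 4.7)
\medskip

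\noindent\textbf{Proof plan.} My plan is to handle the two claims separately: first the size bound using the Sunflower Lemma applied to each piece $M_i \ominus M_{i+1}$, and then the equivalence of hitting sets by a chain-shrinking argument.

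For the size bound, the key observation is that $M_i \ominus M_{i+1}$ contains no sunflower of size $k+1$. Indeed, any such sunflower in $M_i \ominus M_{i+1} \subseteq M_i$ is also a sunflower in $M_i$ of size $\ge k+1$, so its core $C$ is a $k$-core of $M_i$ and, by matryoshka property~(3), lies in $M_{i+1}$. But then every petal contains an edge of $M_{i+1}$ and must have been removed, a contradiction. Combining this with $d(M_i \ominus M_{i+1}) \le d(M_i) \le d(H) - i$ (property~(2)) and the contrapositive of the Sunflower Lemma (Fact~\ref{fact-sunflower}) gives $|E(M_i \ominus M_{i+1})| \le k^{d(H)-i}\cdot (d(H)-i)!$. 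For the final piece $M_{d(H)}$ itself, property~(2) yields $d(M_{d(H)}) \le 0$, so $M_{d(H)}$ contains at most the empty edge, i.e., at most $k^0 \cdot 0! = 1$ hyperedge. Summing over all $d(H)+1$ pieces and reindexing via $j = d(H)-i$ yields the claimed bound $\sum_{i=0}^{d(H)} k^i i!$.

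For the equivalence of size-$k$ hitting sets, the forward direction is immediate: every edge of $K$ lies in some $M_i$, and by matryoshka property~(4) every size-$k$ hitting set of $H$ hits each $M_i$, hence also $K$. For the reverse direction, I would take a size-$k$ hitting set $X$ of $K$ and an arbitrary edge $e \in E(H) = E(M_0)$ and show that $X$ hits $e$ by constructing a nested chain $e = e_0 \supseteq e_1 \supseteq e_2 \supseteq \cdots$ with $e_i \in E(M_i)$. The chain is built greedily: whenever $e_i \notin M_i \ominus M_{i+1}$ and $i < d(H)$, the definition of $\ominus$ gives some $e_{i+1} \in E(M_{i+1})$ with $e_{i+1} \subseteq e_i$, which I take as the next chain element. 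Otherwise I stop. The stopping edge is either some $e_j \in M_j \ominus M_{j+1}$ with $j < d(H)$ or $e_{d(H)} \in M_{d(H)}$; in both cases it lies in $K$, so $X$ hits it, and since the chain is decreasing this same element of $X$ hits $e$.

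The main subtlety I expect is the final chain element: if the chain reaches $i = d(H)$, then by property~(2) the edge $e_{d(H)}$ has size at most $0$. This is fine as long as $M_{d(H)}$ does not actually contain the empty edge; if it did, $X$ could not hit it and the whole equivalence would be vacuous, because no set hits the empty edge, so both $H$ (via property~(4)) and $K$ have no hitting set at all and the claim holds trivially. Thus the apparently troublesome case collapses, and the chain argument always delivers a witness in $K$ that is a subset of~$e$. This completes both parts of the theorem.
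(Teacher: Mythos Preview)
Your proof is correct. The size bound (item~1) is argued exactly as in the paper: each $M_i \ominus M_{i+1}$ is sunflower-free by property~(3), has edge size at most $d(H)-i$ by property~(2), and hence has at most $k^{d(H)-i}(d(H)-i)!$ edges by the Sunflower Lemma; the remaining $M_{d(H)}$ contributes at most one edge.

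For item~2, your forward direction matches the paper's, but your reverse direction takes a genuinely different route. The paper proves, by induction on~$i$, that $H$ and $(M_0 \ominus M_1) \cup \cdots \cup (M_{i-1}\ominus M_i) \cup M_i$ have the \emph{same} size-$k$ hitting sets, via an auxiliary observation that $A\cup B$ and $A \cup (B\ominus C)\cup C$ have the same size-$k$ hitting sets whenever every size-$k$ hitting set of $A\cup B$ hits~$C$. You instead run a direct descending-chain argument: starting from any $e\in E(H)$, repeatedly pass to a subset edge in the next $M_{i+1}$ until you land in some piece of~$K$, then lift the hit back up to~$e$. Your approach is more elementary and avoids the auxiliary lemma; the paper's induction is more uniform in that it establishes both inclusions simultaneously at each level. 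Your treatment of the degenerate empty-edge case (where reaching $M_{d(H)}$ forces $\emptyset\in K$, making the hypothesis on~$X$ vacuous and, via property~(4), forcing $H$ to have no size-$k$ hitting set either) is correct and is a point the paper leaves implicit.
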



\begin{proof}
  For the first item, fix an~$i$ and consider $M_i \ominus
  M_{i+1}$. We claim that these ``remaining hyperedges'' cannot contain
  a sunflower $\{e_1,\dots,e_{k+1}\}$: If it did, by the third
  property of a matryoshka sequence the sunflower's core would be an
  element of $M_{i+1}$ and, thus, none of the $e_i$ would 
  be in $M_i \ominus M_{i+1}$. By the Sunflower Lemma and the fact
  that $d(M_i \ominus M_{i+1}) \le d(M_i) \le d(H)-i$ by the second
  property of a matryoshka sequence, we get that $M_i
  \ominus M_{i+1}$ cannot have more than $k^{d(H)-i}\cdot (d(H)-i)!$
  hyperedges. This means that the union $K$ cannot have more
  hyperedges than the sum of these numbers for $i \in
  \{1,\dots,d(H)\}$ plus the number of hyperedges
  in~$M_{d(H)}$. However, by the second property we have $d(M_{d(H)})
  \le 0$ and, thus, this hypergraph can contain at most one edge (the
  empty edge). We account for this single edge by the value $k^0 0! =
  1$ for $i=0$ in the sum $\sum_{i=0}^{d(H)} k^i i!$.

  For the second item, we make a simple observation: Let $A$, $B$, and
  $C$ be hypergraphs (all with the same vertex set~$V$ as always) such
  that every size-$k$ hitting set of $A \cup B$ is also a hitting set of
  $C$. We claim that
  \begin{align}
    \text{$A \cup B$ and $A \cup (B\ominus C) \cup C$ have the same
    size-$k$ hitting sets.} \tag{$*$}
  \end{align}
  To see this, first consider a size-$k$ hitting set~$X$ of $A \cup
  B$. Trivially, $X$ is also a hitting set of $A
  \cup (B \ominus C) \subseteq A \cup B$ and $X$ is also hitting set of~$C$ by
  assumption. Now, second, consider a size-$k$ 
  hitting set~$X$ of $A \cup (B\ominus C) \cup C$. Trivially, $X$ hits
  all of~$A$ as well as all hyperedges in~$B$ that are in $B \ominus
  C$, so consider an edge $e \in B - (B\ominus C)$. By definition,
  this means that $e$ must be a superset of some $e' \in E(C)$ and $X$
  was a hitting set for~$C$ and, thus, hits $e'$ and therefore also~$e$.

  Let us now prove the second item by proving by induction on~$i$ that
  $H$ and
  \begin{align*}
    (M_0 \ominus M_1) \cup (M_1 \ominus M_2) \cup (M_2 \ominus
    M_3) \cup \dots \cup (M_{i-1} \ominus M_{i}) \cup M_{i}.    
  \end{align*}
  have the same size-$k$ hitting sets. The base case $i=0$ is true by
  the first property of a matryoshka sequence. For the
  inductive step from $i$ to $i+1$, let $A = (M_0 \ominus M_1) \cup
   \dots \cup (M_{i-1} \ominus M_{i})$ and $B = M_i$ and $C =
  M_{i+1}$. By the inductive assumption, $H$ and $A \cup B$ have the
  same size-$k$ hitting sets. The fourth property of a matryoshka
  sequence now implies that every size-$k$ hitting set of $A \cup B$
  is also a hitting set of~$C$. By ($*$) we 
  get that $A \cup B$ has the same size-$k$ hitting sets as $A \cup (B
  \ominus C) \cup C$. Thus, $H$ and $A \cup (B \ominus C) \cup C$ have
  the same size-$k$ hitting sets, which was exactly the inductive
  claim for~$i+1$.  
\end{proof}

Instantiating the theorem with
$(H_0,\dots,H_{d(H)})$ tells us that, if we can compute the elements of $K =
(H_0 \ominus H_1) \cup \dots \cup (H_{d(H)-1} \ominus H_{d(H)}) \cup
H_{d(H)}$ in parallel, we can compute a kernel for the hitting set
problem in parallel. Clearly, ``computing $K$'' essentially boils down
to ``computing the $H_i$'' in parallel. Thus, the real question, which
we address next, is how quickly and easily we can compute the
hypergraphs~$H_i$.

At this point, we briefly need to address some technical issues
concerning the coding of hypergraphs. For our purposes, it is largely
a matter of taste how the input hypergraph $H_0$ is encoded, but the
encoding of the later graphs~$H_i$ becomes important in the context of parallel
constant-time computations. We consider $H = (V,E)$ fixed and encoded
using, for instance, an incidence matrix (having $|V|$ columns and
$|E|$ rows). We encode a \emph{refinement of $H$,} that is, a
hypergraph $H' = (V, E')$ with the property that each $e' \in E'$ is a
subset of some $e \in E$, using a matrix of $2^{d(H)}$ columns and $|E|$
rows. There is a column for each of the at most $2^{d(H)}$ possible
subsets of an edge $e \in E$ and the entry at the column for a given
row is $1$ if this subset is an element of $E'$; otherwise it is $0$.
Let us call this the \emph{refinement matrix enconding} of
hypergraph~$H'$ (with respect to the fixed input hypergraph~$H$).


\begin{lemma}[Computing Cores in Constant Depth]\label{lemma-ac0-hi}
  For each $d$ and $i$ there  is a \textsc{dlogtime}-uniform family of
  $\Class{AC}$-circuits that
 {\begin{enumerate}
  \item on input of the incidence matrix of a hypergraph $H$ with $d(H) \le
    d$, a number $k$, and the refinement matrix encoding of the
    hypergraph $H_i$,
  \item outputs the refinement matrix encoding of $H_{i+1}$,
  \item has constant depth, and
  \item has size $f(k,d) \cdot |V|^{O(1)} |E|^{O(1)}$ where $f$ is
    some computable function.
  \end{enumerate}}
\end{lemma}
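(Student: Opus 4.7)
The plan is to reduce the construction of $H_{i+1} = \operatorname{\mathit k-cores}(H_i)$ to detecting sunflowers in $H_i$ of size $k+1$ and to resolve that detection by a single application of color coding. Each output bit of the refinement matrix of $H_{i+1}$ is indexed by a pair (row $e^\star \in E$, column $C \subseteq e^\star$) and should equal~$1$ iff $C$ is the core of some sunflower in $H_i$ of size $\ge k+1$. Since this condition depends only on $C$ and on the input $H_i$, it suffices to build a constant-depth subcircuit that, given $C$, tests the existence of edges $e_1,\dots,e_{k+1} \in E(H_i)$ with $C \subsetneq e_j$ for every $j$ and pairwise intersection exactly $C$.

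For that test I would invoke Fact~\ref{fact-cc} with $c = k+1$ colors and $K = (k+1)d$ distinguished vertices, producing in constant depth a family $\Lambda$ of colorings $\lambda : V \to \{1,\dots,k+1\}$ with the property that every desired coloring of up to $K$ pairwise distinct vertices is realized by some $\lambda \in \Lambda$. For any actual sunflower $\{e_1,\dots,e_{k+1}\}$ in $H_i$ with core $C$, the at most $K$ petal vertices in $\bigcup_j (e_j \setminus C)$ are pairwise distinct by the definition of sunflower, so Fact~\ref{fact-cc} supplies some $\lambda \in \Lambda$ that colors every $v \in e_j \setminus C$ with color~$j$. This is the key structural observation that lets one application of color coding replace the iterated searches used in the $O(d)$-depth algorithm.

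The output bit for $(e^\star, C)$ is then computed by the constant-depth formula
\[
\bigvee_{\lambda \in \Lambda}\ \bigwedge_{j=1}^{k+1}\ \bigvee_{(e,S)} \Bigl(\, [S \in E(H_i)] \wedge [C \subsetneq S] \wedge \bigwedge_{v \in S \setminus C} [\lambda(v) = j]\, \Bigr),
\]
where $(e,S)$ ranges over the at most $|E|\cdot 2^d$ cells of the input refinement matrix of $H_i$; the bit $[S \in E(H_i)]$ is read directly off the input, and the predicate $[C \subsetneq S]$ together with the innermost conjunction have fan-in at most $d$ and are handled in constant depth. In the $\Leftarrow$ direction, correctness follows because for a fixed $\lambda$ the $k+1$ witness edges have petal vertices lying in pairwise disjoint color classes, so $e_i \cap e_j = C$ for $i\neq j$ and we obtain a sunflower of size $k+1$ with core $C$; the $\Rightarrow$ direction is the color-coding argument from the previous paragraph.

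For the size bound, Fact~\ref{fact-cc} gives $|\Lambda| \le g(k,d)\cdot |V|\log^2 |V|$, so each output bit uses at most $g(k,d)\cdot (k+1)\cdot |E|\cdot 2^d \cdot O(d)$ gates, and there are $|E|\cdot 2^d$ output bits, yielding the required $f(k,d)\cdot |V|^{O(1)} |E|^{O(1)}$ total size; \textsc{dlogtime}-uniformity is inherited from Fact~\ref{fact-cc} and from the routine indexing of the refinement matrix. The point I expect to be the main obstacle is aligning the combinatorial definition of a sunflower with what color coding can enforce: color coding can demand disjointness of petal vertices (via distinct target colors) but not that certain vertices coincide, and it is essential here that sunflower petals are by definition pairwise disjoint outside the core, so that no coloring of the core vertices is needed at all.
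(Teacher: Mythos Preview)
Your proposal is correct and follows essentially the same approach as the paper: both reduce the computation of each output bit to testing whether a fixed candidate core $C$ admits a size-$(k+1)$ sunflower in $H_i$, and both resolve that test by a single application of the Color Coding Lemma with $k+1$ colors, checking for each $\lambda$ and each $j$ whether some edge of $H_i$ properly contains $C$ and has its remaining vertices colored~$j$. Your write-up is in fact slightly more explicit (the Boolean formula and the strict-superset condition), with only a minor slip in the per-bit gate count (you dropped the $|\Lambda|$ factor), which does not affect the overall $f(k,d)\cdot |V|^{O(1)}|E|^{O(1)}$ bound.
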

\begin{proof}
  By the definition of refinement matrix encodings, it suffices to
  show how we can decide for a single set $e' \subseteq e \in E$
  whether it is a hyperedge of $H_{i+1}$ or not using a circuit of
  constant depth and size  $f(k,d) \cdot |V|^{O(1)} |E|^{O(1)}$ -- it
  then follows that we can do this for all possible $e'$ in parallel
  without increasing the circuit depth at all and without increasing
  the circuit size by more than $2^d |E|$.

  By definition, $e'$ is a hyperedge of $H_{i+1}$ if it is a $k$-core
  of $H_i = (V,E_i)$. Thus, ``all'' we need to test is whether there
  are petals $p_1,\dots, p_{k+1} \in E_i$ that form a
  sunflower in $H_i$ with core~$e'$. This question can be answered
  very quickly in parallel using color coding as follows: The vertices
  from the Color Coding Lemma (Fact~\ref{fact-cc}) that we ``have in
  mind'' are the vertices in the petals and the color ``we have in
  mind for them'' is $i$ for all vertices in $p_i - e'$. Formally, we use Fact~\ref{fact-cc} to obtain a set $\Lambda$ of
  colorings $\lambda\colon V \to \{1,\dots,k+1\}$ and check whether for one
  of these colorings for each $j \in \{1,\dots,k+1\}$ there exists a
  hyperedge $p_j \in E_i$ with the properties that (1) $p_j \supseteq
  e'$ and (2) all vertices in $p_j - e'$ have the color~$j$. Clearly,
  such a coloring together with the hyperedges $p_1$ to $p_{k+1}$
  exists if, and only if, $e'$ is the core of a sunflower of size
  $k+1$ in~$H_i$. Even more importantly, Fact~\ref{fact-cc} provides
  us with such a coloring~$\lambda$ -- if it
  exists -- via a circuit of constant depth and size at most some
  polynomial in the number $|V|$ of vertices times a computable
  function $f(k,d)$ depending on the number $k+1$ of colors and the
  maximum number $d(k+1)$ of vertices in the sunflower (for which we
  ``had colors in mind''). 
\end{proof}

The lemma tells us that once we have computed some $H_i$, we can
compute the next $H_{i+1}$ using only constant additional depth and
using $f(k,d) \cdot |V|^{O(1)} |E|^{O(1)}$ additional size. Since
$H_i \ominus H_{i+1}$ can easily be computed from $H_i$ and $H_{i+1}$
in constant depth, we get:

\begin{theorem}[Depth-$O(d)$ Kernelization Algorithm, \cite{ChenFH2017}]\label{thm-kernel-1}
  For each~$d$ there is a \textsc{dlogtime}-uniform family of
  $\Class{AC}$-circuits that
  \begin{enumerate}
  \item on input of a hypergraph $H$ with $d(H) \le
    d$ and a number $k$
  \item outputs a hypergraph $K$ having the same
    size-$k$ hitting sets as $H$ and having at most $\sum_{i=0}^{d(H)}
    k^i i!$ hyperedges, 
  \item has depth $O(d)$,
  \item and has size $f(k,d) \cdot |V|^{O(1)} |E|^{O(1)}$ where $f$ is
    some computable function.
  \end{enumerate}  
\end{theorem}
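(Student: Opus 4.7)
The plan is to build the circuit described in Theorem~\ref{thm-kernel-1} by stacking $d+1$ copies of the core-computing circuit from Lemma~\ref{lemma-ac0-hi} on top of each other and then adding a shallow ``combination layer'' that assembles~$K$ from the pieces $H_0,H_1,\dots,H_{d}$.

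First I would fix the input encoding: the input hypergraph~$H$ is presented, say, via its incidence matrix, and I treat $H_0 = H$ as already being given in the refinement matrix encoding used by Lemma~\ref{lemma-ac0-hi} (trivially, since every hyperedge is a subset of itself). The next step is to iteratively instantiate Lemma~\ref{lemma-ac0-hi}: for $i=0,1,\dots,d-1$ I feed the refinement matrix encoding of $H_i$ into a copy of the circuit from Lemma~\ref{lemma-ac0-hi}, obtaining the refinement matrix encoding of $H_{i+1}$. Each layer adds only constant depth and size $f'(k,d)\cdot|V|^{O(1)}|E|^{O(1)}$, so after $d$ layers the total depth is $O(d)$ and the total size is still of the form $f(k,d)\cdot|V|^{O(1)}|E|^{O(1)}$ (absorbing the factor of $d$, which depends only on the parameter, into~$f$).

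Once $H_0,\dots,H_{d}$ are available as refinement matrices, the operation $H_i \ominus H_{i+1}$ can be implemented in constant depth: for each candidate hyperedge $e$ in the refinement matrix of $H_i$ we need to check that no subset $e' \subseteq e$ belongs to $H_{i+1}$, and there are only $2^{d} \le 2^{d(H)}$ such subsets to inspect in parallel, each inspection being a single lookup in the refinement matrix of $H_{i+1}$. I would then take the union $K = (H_0 \ominus H_1)\cup\cdots\cup(H_{d-1}\ominus H_{d})\cup H_{d}$ by an $\lor$-layer of fan-in $d+1$, again at constant depth. Correctness of the output follows from Lemma~\ref{lemma-core-properties} (which guarantees $(H_0,\dots,H_{d(H)})$ is a matryoshka sequence) together with Theorem~\ref{theorem-kernel}, which gives both the size bound $\sum_{i=0}^{d(H)} k^i i!$ and the preservation of size-$k$ hitting sets.

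The main subtlety, and the only step that is not purely a matter of wiring, is ensuring that the refinement matrix encoding is preserved from one layer to the next: Lemma~\ref{lemma-ac0-hi} requires its input $H_i$ in that encoding, and produces $H_{i+1}$ in that encoding as well, which is exactly what is needed to chain the circuits. The \textsc{dlogtime}-uniformity of the whole family reduces to the \textsc{dlogtime}-uniformity of the pieces provided by Lemma~\ref{lemma-ac0-hi} and by Fact~\ref{fact-cc}, which it uses internally; the outer construction only performs a bounded amount of indexing arithmetic over $i \in \{0,\dots,d\}$ and over subsets of a hyperedge of size at most $d$, both of which are trivially \textsc{dlogtime}-computable given that $d$ is a fixed constant of the circuit family.
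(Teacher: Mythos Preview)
Your proposal is correct and follows exactly the approach the paper takes: the paper's ``proof'' of Theorem~\ref{thm-kernel-1} is just the sentence preceding it, which says to iterate Lemma~\ref{lemma-ac0-hi} to obtain the $H_i$ (each iteration adding constant depth) and then compute the $H_i \ominus H_{i+1}$ in constant depth, with correctness and the size bound coming from Lemma~\ref{lemma-core-properties} and Theorem~\ref{theorem-kernel}. You have simply filled in the details of that sketch---the refinement-matrix bookkeeping, the $\lor$-layer for the union, and the uniformity remarks---all of which are straightforward and in line with what the paper intends.
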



\section{Pseudo-Cores and Pseudo-Sunflowers}
\label{section-pseudo}

The parallel kernelization algorithm
described in the previous section has a depth that is linear in the
parameter $d$, the maximum size of 
any hyperedge in the input hypergraph. The reason for this linear
dependency was that, while we managed to reduce not just one but
all sunflowers in the hypergraph to their cores in parallel, we had to
repeat this ``reduce to core'' procedure $d$ times -- and each round
adds a constant number of layers to the circuit.

It is not obvious how this build-up of layers can be avoided. In the
following, we first explain why there are good
reasons to believe that the computation of the hypergraphs $H_i$
necessitates deeper and deeper circuits. Following this discussion, we
explain our proposal for side-stepping these difficulties: we replace
the hypergraphs $H_i$ by new hypergraphs $H'_i$ that are easier to
compute but still form a matryoshka sequence and -- hence -- can serve
as a replacement for the $H_i$ in the Kernel Theorem, Theorem~\ref{theorem-kernel}.

\paragraph*{The Difficulty: Cores of Cores Are Hard to Compute}

There are several reasons to believe that one cannot compute kernels
for the hitting set problem in constant depth using the repeated
sunflower-reduction-procedure. 
A first idea for reaching a constant depth is to apply the
reduction procedure only a constant number of times (instead of $d$
times). Indeed, it is not immediately clear that a ``core of cores''
is not already a core in the first round -- so do we actually need
more than \emph{one} round? Unfortunately, the answer is ``yes, we
do'': Figure~\ref{fig:example1} shows an example where $\{a,b\}$ is a
2-core of the 2-cores, but it is not a 2-core of the original
hypergraph. For a more complex example, where $d$ rounds are needed to
arrive at a constant size kernel, consider the trees $T^\ell_d$
(defined in detail later on) that are perfectly balanced trees of
depth $d$ with $\ell+1$ children per node for a number $\ell \ge k$ --
and now consider the hypergraph $H^d$ that has one hyperedge  
for each leaf of $T^\ell_d$ and this hyperedge contains all the nodes on the
path from the leaf to the root~$r$. Now, for $i>0$ we have
$\operatorname{\mathit k-cores}(H^i) = H^{i-1}$ and the latter
hypergraphs all have a size of at least the arbitrarily large~$\ell$
for~$i>1$. Thus, we need to apply the ``core of cores'' procedure at
least $d-1$ times before arriving at a hypergraph whose size depends
only on the parameter.  

A second, more promising idea is the observation that it might be
possible to somehow ``collapse'' two (and then, hopefully, all)
applications of the sunflower-reduction-procedure ``into a single
application.'' Unfortunately, we also run into a problem here, namely in
the ``collapsed color coding process.'' In essence, color coding is
great at ensuring that certain vertex sets are disjoint (namely those
vertex sets that receive different colors), but fails at enforcing
that the same vertices are used in different hyperedges -- which is
exactly what is needed when the definition of some $H_i$ refers to
$H_{i-1}$, which in turn refers to some $H_{i-2}$.


These problems with avoiding the build-up of additional
layers with rising~$d$ have led Chen et al.~\cite{ChenFH2017} to the
conjecture that the build-up is unavoidable and that all parallel
kernelization algorithms for $\PLang[k,d]{hitting-set}$ have a runtime
that is linear in~$d$. We agree with Chen et al.\ in their assessment
that the computation of the $H_i$ presumably necessitates a linear
circuit depth -- but, nevertheless, we will refute their conjecture in
the following.

\paragraph*{The Solution: Pseudo-Cores As a Replacement For Cores}

Our idea is \emph{not} to compute the sets $H_i$ (we do not see how
this can be done in constant time), but to compute hypergraphs $H'_i$
with rather similar properties (formally, they will form matryoshka
sequences as well) that we \emph{can} compute in constant time for all $d$
and~$i$.
We introduce a new notion of \emph{$k$-pseudo-cores of level~$i$} and
$H'_i$ will be the hypergraph whose edges are the $k$-pseudo-cores of
level~$i$. Crucially, the definition of $H'_i$ (only) refers directly
to the original input graph~$H$ and its hyperedges can be obtained
from $H$ directly using color coding. At the same time, the $H'_i$
will form a matryoshka sequence and, hence, just as for the $H_i$, the
core of any sunflower of $H'_{i-1}$ must already be present in $H'_i$.

The definition of pseudo-cores is somewhat technical.
We will, however, show that all cores are pseudo-cores of level 1, cores of
cores are pseudo-cores of level~2, and so on. The reverse implication
does not hold (for instance, pseudo-cores of level~2 need not be cores
of cores). 
For a ``level'' $L$ and a number~$k$, let $T_L^k$ denote the rooted
tree in which all leafs are at the same depth~$L$ and all inner nodes
have exactly $k+1$ children. The root of $T_L^k$ will always be called
$r$ in the following. Thus, $T_1^k$ is just a star consisting of $r$
and its $k+1$ children, while in $T_2^k$ each of the $k+1$ children of
$r$ has $k+1$ new children, leading to $(k+1)^2$ leafs in total. For
each $l\in \operatorname{leafs}(T_L^k) = \{\,l \mid \text{$l$ is a leaf
  of $T_L^k$}\,\}$ there is a unique path $(l^0,l^1,\dots,l^L)$ from
$l^0 = r$ to $l^L = l$.  An example for the following definition is
shown in  Figure~\ref{fig:example-pseudo}.

\begin{definition}[Pseudo-Sunflowers and Pseudo-Cores]\label{def-pseudo-core}
  Let $H = (V,E)$ be a hypergraph and let $L$ and $k$ be fixed. A set
  $C \subseteq V$ is called a \emph{$k$-pseudo-core of level~$L$ in
    $H$} if there exists a mapping $S \colon
  \operatorname{leafs}(T_L^k) \times \{0,1,\dots,L\} \to 2^V$, called a
  \emph{$T_L^k$-pseudo-sunflower for $H$ with pseudo-core~$C$}, such that for all $l, m \in
  \operatorname{leafs}(T_L^k)$ with $l\neq m$ we have:  
  \begin{enumerate}
  \item $S(l,0) = C$.
  \item $S(l,0) \cup S(l,1) \cup \dots \cup S(l,L) \in E$ and let us
    write $S(l)$ for this hyperedge.
  \item $S(l,i) \cap S(l,j) = \emptyset$ for $0\le i < j \le L$, but $S(l,i)
    \neq \emptyset$ for $i \in \{1,\dots,L\}$.
  \item Let $z \in \{1,\dots, L\}$ be the smallest number such that
    $l^z \neq m^z$, that is, $z$ is the depth where the path from $r$
    to $l$ and the path from $r$ to $m$ diverge for the first
    time. Then $S(l,z) \cap S(m,z) =\emptyset$ must hold.
  \end{enumerate}
\end{definition}

\begin{figure}[htpb]
  \begin{tikzpicture}

    \node [below right] at (0mm,4.8cm) {$H$};
    
    \vertices
    \scoped[r/.style=red,g/.style=green!50!black,b/.style=blue]{
      \hyperedgesh
    }

  \end{tikzpicture}
  \hfill  
  \begin{tikzpicture}[every label/.style={font=\small},yscale=1.2]
    \node [] at (0,1.5) {$T_2^2$};
    
    \node [dot,label=left:$r$] (r) at (0,0) {};
    \node [dot,label=left:$c_1$] (c1) at (1,3em) {};
    \node [dot,label=above left:$c_2$] (c2) at (1,0) {};
    \node [dot,label=left:$c_3$] (c3) at (1,-3em) {};

    \node [dot,label=above:$l$] (c11) at (2,4em) {};
    \node [dot] (c12) at (2,3em) {};
    \node [dot] (c13) at (2,2em) {};

    \node [dot] (c21) at (2,1em) {};
    \node [dot] (c22) at (2,0em) {};
    \node [dot] (c23) at (2,-1em) {};

    \node [dot] (c31) at (2,-2em) {};
    \node [dot] (c32) at (2,-3em) {};
    \node [dot] (c33) at (2,-4em) {};

    \graph [use existing nodes] {
      r -- {
        c1 -- {c11, c12, c13},
        c2 -- {c21, c22, c23},
        c3 -- {c31, c32, c33}
      }
    };

    \foreach \l/\colo[count=\i] in {
      {{a,b},{c,f},{u,v,w}}/red,
      {{a,b},{c,g},{r,s,t,m}}/red,
      {{a,b},{c,h,o},{p,q,l,e}}/red,
      {{a,b},{d,i},{o,r,u}}/blue,
      {{a,b},{d,j},{p,s,v}}/blue,
      {{a,b},{d},{k,q,t,w}}/blue,
      {{a,b},{e},{l}}/green!50!black,
      {{a,b},{e},{m}}/green!50!black,
      {{a,b},{e},{n}}/green!50!black%
    } {
      \foreach \entry[count=\j] in \l {
        \node [anchor=mid west,color=\colo] at (1cm+\j cm*1.5,5em-\i em) {$\{\entry\}$};
      }
    }

    \node [anchor=mid west] at (2.5cm, 5.5em) {$S(l,0)$};
    \node [anchor=mid west] at (4cm, 5.5em) {$S(l,1)$};
    \node [anchor=mid west] at (5.5cm, 5.5em) {$S(l,2)$};
    
  \end{tikzpicture}
  
  \caption{A $T_2^2$-pseudo-sunflower~$S$ for the level~2 pseudo-core
    $\{a,b\}$ in the hypergraph~$H$. The four properties of 
    pseudo-sunflowers hold: In ``column $S(l,0)$'' we always
    have the  pseudo-core, the union of each row is a hyperedge, the
    sets in a row form a partition of this hyperedge, 
    and -- most importantly~-- we have the disjointness property at each
    ``branch'' of the tree. This property requires that for column
    $S(l,1)$ the sets of all red vertices, of all blue vertices, and of
    all green vertices are pairwise disjoint; whereas for column
    $S(l,2)$ it requires that the three red sets are pairwise
    disjoint, 
    likewise for the three blue sets, and the three green
    sets. However, it is permissible (and the case) that a red vertex in
    the third column is the same as green vertex in the third or the
    second column.} 
  \label{fig:example-pseudo}
\end{figure}
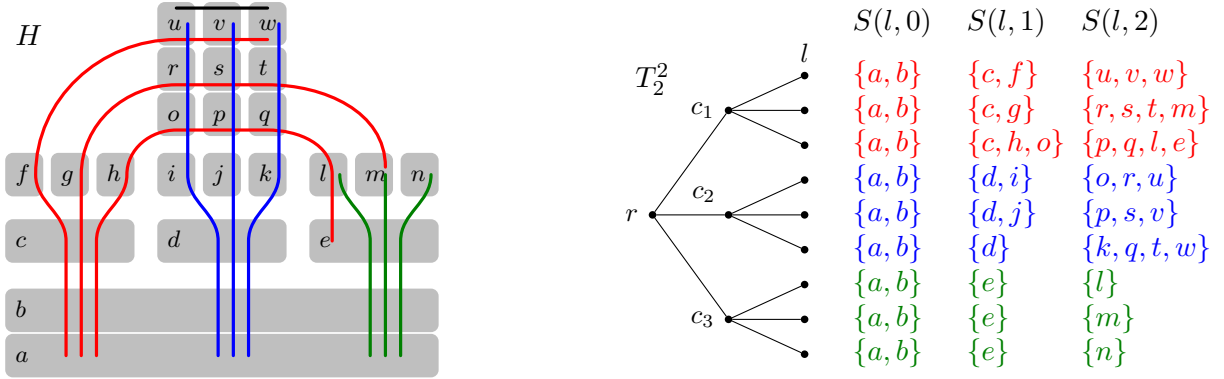


\begin{definition}\label{definition-pseudo}
  For a hypergraph $H = (V,E)$ and numbers $k$ and $i \ge 1$ let $H'_i
  = \bigl(V, \{C \mid \text{$C$ is a $k$-pseudo-core of level~$i$
    of~$H$}\}\bigr)$ and let $H'_0 = H$.
\end{definition}
To get some intuition, let us have a closer look at $H'_1$. As the
following lemma shows, pseudo-cores and cores are still \emph{very}
closely related at this first level -- while for larger levels, we no
longer have $H_i = H'_i$, but only $H_i \subseteq H'_i$.

\begin{lemma}\label{lemma-ph1}
  Let $H$ be a hypergraph and $k$ a number. Then  $H_1 = H'_1$.
\end{lemma}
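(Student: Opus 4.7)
The plan is to unfold the definition of a pseudo-core at the lowest possible level and observe that it collapses to the classical notion of a sunflower core, so that both inclusions become routine set-theoretic verifications.

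First I would spell out what $T_1^k$ is: a star whose root $r$ has exactly $k+1$ children, which are also the leaves $l_1,\dots,l_{k+1}$. For any two distinct leaves $l_i,l_j$, the paths from $r$ diverge already at depth $z=1$, so the fourth condition of Definition~\ref{def-pseudo-core} simplifies to the single requirement $S(l_i,1)\cap S(l_j,1)=\emptyset$. Combined with conditions (1)--(3), a $k$-pseudo-core of level~$1$ is precisely a set $C$ for which one can find $k+1$ non-empty, pairwise disjoint sets $P_1,\dots,P_{k+1}\subseteq V$, each disjoint from $C$, such that $e_i := C\cup P_i \in E$ for every $i$. I would then set $P_i=S(l_i,1)$ when going from pseudo-cores to this data, and set $S(l_i,0)=C$, $S(l_i,1)=P_i$ when going back.

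For the inclusion $H_1 \subseteq H'_1$, I would take a $k$-core $C$ of $H$, fix a witnessing sunflower $\{e_1,\dots,e_{k+1}\}\subseteq E$ with core $C$, and define $S(l_i,0)=C$ and $S(l_i,1)=e_i\setminus C$. Each petal is a proper superset of $C$, so $S(l_i,1)\neq\emptyset$. The sunflower property $e_i\cap e_j=C$ for $i\neq j$ yields $(e_i\setminus C)\cap(e_j\setminus C)=\emptyset$, which is exactly the required disjointness at depth $1$, and the remaining conditions are immediate. For the reverse inclusion $H'_1 \subseteq H_1$, I would take a $k$-pseudo-core $C$ of level $1$ together with its witness $S$, set $e_i := S(l_i,0)\cup S(l_i,1)=C\cup P_i\in E$, and compute $e_i\cap e_j=(C\cup P_i)\cap(C\cup P_j)=C$ using that the $P_i$ are pairwise disjoint and each is disjoint from $C$; together with $P_i\neq\emptyset$ this exhibits $\{e_1,\dots,e_{k+1}\}$ as a sunflower of size $k+1$ in $H$ with core $C$.

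There is no real obstacle here, since level $1$ is the base case where the tree $T_1^k$ has no internal branching below the root and the pseudo-sunflower axioms reduce one-to-one to the sunflower axioms; the only thing to watch is the proper-superset requirement in the sunflower definition, which is ensured by the non-emptiness condition $S(l,1)\neq\emptyset$ together with the disjointness $S(l,0)\cap S(l,1)=\emptyset$.
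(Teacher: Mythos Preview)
Your proposal is correct and follows essentially the same approach as the paper's proof: both directions unfold the level-$1$ pseudo-sunflower definition and match it against the classical sunflower definition via $S(l_i,0)=C$ and $S(l_i,1)=e_i\setminus C$. You are in fact slightly more careful than the paper in explicitly noting that $S(l,1)\neq\emptyset$ together with $S(l,0)\cap S(l,1)=\emptyset$ guarantees the proper-superset requirement for petals.
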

\begin{proof}
  Consider a $k$-pseudo-core of~$H$ of level~$1$. The tree
  $T_1^k$ consists of a root~$r$ with 
  leafs $l_1$ to $l_{k+1}$. Consider a $T_1^k$-pseudo-sunflower $S$ 
  and let us fix some leaf $l$ of~$T_1^k$. The pseudo-sunflower must  map
  $(l,0)$ to the pseudo-core $C$ and $(l,1)$ to a set of
  vertices that is disjoint from~$C$. This means that $S(l) =
  S(l,0) \cup S(l,1) = C \cup S(l,1) \in E$ is a hyperedge in $H$ that
  contains the (pseudo)core~$C$. Furthermore, for any two different
  leafs $l$ and $m$  we have $S(l,1) \cap S(m,1) = \emptyset$ or,
  equivalently, $S(l) \cap S(m) = C$. This shows that $\{S(l_1),\dots,
  S(l_{k+1})\}$ is a sunflower with core~$C$. For the other direction,
  given a sunflower $\{e_1,\dots,e_{k+1}\}$ of size $k+1$ in~$H$ 
  with core~$C$, the $T_1^k$-pseudo-sun\-flower $S$ given by $S(l_i,0) =
  C$ and $S(l_i,1) = e_i - C$, where the $l_i$ are the $k+1$ leafs
  of $T_1^k$, witnesses that $C$ is also $k$-pseudo-core of level~1 of~$H$. 
\end{proof}

\section{The Constant-Depth Kernelization}
\label{section-algorithm}

We show that hitting set kernels can be
computed in constant depth in two steps:
\begin{enumerate}
\item We show that $(H'_0,\dots,H'_{d(H)})$ is a matryoshka sequence.
\item We show that all $H'_i$ can be computed by a constant depth
  circuit whose depth is independent of both $k$ and $d(H)$.
\end{enumerate}
By the Kernel Theorem, Theorem~\ref{theorem-kernel}, taken together,
these two items yield the desired kernelization algorithm.

\paragraph*{Step 1: Pseudo-Cores Form Matryoshka Sequences.}

Our first aim is to show the following theorem, which is an analogue
of Lemma~\ref{lemma-core-properties} for pseudo-cores:

\begin{theorem}\label{theorem-pseudo}
  For every hypergraph $H$ and number $k$, the sequence
  $(H'_0,\dots,H'_{d(H)})$ from Definition~\ref{definition-pseudo}
  is a matryoshka sequence for~$H$ and~$k$.  
\end{theorem}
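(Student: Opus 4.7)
My plan is to check the four conditions of Definition~\ref{definition-matryoshka} in turn. Conditions~1 and~2 are immediate: $H'_0 = H$ by Definition~\ref{definition-pseudo}, and if $C$ is a $k$-pseudo-core of level~$i$ with witnessing pseudo-sunflower $S$, then every hyperedge $S(l) \in E$ decomposes as the disjoint union $C \cup S(l,1) \cup \dots \cup S(l,i)$ with all $S(l,t)$ non-empty for $t \ge 1$, forcing $|C| \le d(H) - i$.

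For condition~3, I take a $k$-core $C$ of $H'_i$ witnessed by a sunflower $\{e_1,\dots,e_{k+1}\} \subseteq E(H'_i)$ and, for each $j$, a $T_i^k$-pseudo-sunflower $S_j$ certifying that $e_j$ is a $k$-pseudo-core of level~$i$ of~$H$. I then build a $T_{i+1}^k$-pseudo-sunflower $S$ for $C$ by identifying the $j$th subtree hanging below the root of $T_{i+1}^k$ with a copy of $T_i^k$ and, for a leaf $l'$ lying in that subtree that corresponds to a leaf $l$ of $T_i^k$, setting
\[
S(l',0) = C, \qquad S(l',1) = e_j - C, \qquad S(l',t) = S_j(l,t-1) \text{ for } t \ge 2.
\]
Then $\bigcup_{t=0}^{i+1} S(l',t) = e_j \cup \bigcup_{t=1}^{i} S_j(l,t) = S_j(l) \in E$; within-leaf disjointness is inherited from $S_j$ once one observes $C \cup (e_j - C) = e_j = S_j(l,0)$; and the cross-leaf condition splits into the divergence-at-depth-$1$ case, which reduces to $(e_j - C) \cap (e_{j'} - C) = (e_j \cap e_{j'}) - C = \emptyset$ by the sunflower property, and the divergence-at-depth-$z \ge 2$ case, where both leaves sit in the same subtree~$j$, their counterparts in $T_i^k$ first diverge at depth $z-1$, and condition~4 for $S_j$ finishes the case.

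The main obstacle is condition~4, which I would address through an auxiliary lemma proved by induction on~$L$: for every family of non-empty sets $S(l,t)$ indexed by $l \in \operatorname{leafs}(T_L^k)$ and $t \in \{1,\dots,L\}$ that are pairwise disjoint within each leaf and satisfy $S(l,z) \cap S(m,z) = \emptyset$ whenever $z$ is the first divergence depth of leaves $l \ne m$, every set $X \subseteq V$ meeting $S(l,1) \cup \dots \cup S(l,L)$ for all leaves $l$ has $|X| \ge k+1$. The base case $L = 1$ is immediate, since the $k+1$ sets $S(l_j,1)$ are pairwise disjoint. For the inductive step I look at the $k+1$ subtrees below the root: either some subtree~$j$ contains no leaf $l$ with $X \cap S(l,1) \ne \emptyset$, in which case $X$ still meets $S(l,2) \cup \dots \cup S(l,L)$ for every $l$ in that subtree and the shifted family $(l,t) \mapsto S(l,t+1)$ on $T_{L-1}^k$ triggers the induction hypothesis; or every subtree contains such a leaf, and the pairwise disjoint sets $A_j = \bigcup_{l \in L_j} S(l,1)$ provide $k+1$ distinct witnesses in~$X$. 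Relaxing the fixed-core requirement in the lemma is the one subtle point, because the natural restriction of a pseudo-sunflower to a subtree has varying ``zeroth'' sets. Condition~4 for $i \ge 1$ then follows directly: if $X$ is a size-$k$ hitting set of $H$ and $C$ is a $k$-pseudo-core of level~$i$ witnessed by $S$, then $X$ hits each $S(l) = C \cup S(l,1) \cup \dots \cup S(l,i)$, so $X \cap C = \emptyset$ would force $X$ to meet $S(l,1) \cup \dots \cup S(l,i)$ for every leaf and, by the lemma, $|X| \ge k+1$, contradicting $|X| \le k$; the case $i = 0$ is trivial because $H'_0 = H$.
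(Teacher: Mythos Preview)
Your argument is correct, and for conditions~1--3 it is essentially the paper's proof: the paper's Lemmas~\ref{lemma-p1}--\ref{lemma-p3} are exactly your verifications, including the same ``graft $k+1$ copies of $T_i^k$ below a new root'' construction for condition~3. One small point you gloss over: for $i=0$ the hyperedges $e_j$ lie in $H'_0 = H$ by fiat rather than as level-$0$ pseudo-cores, so there is no $T_0^k$-pseudo-sunflower $S_j$ to invoke; the paper handles this base case separately via Lemma~\ref{lemma-ph1}, and your construction degenerates correctly if one reads ``$T_0^k$-pseudo-sunflower for $e_j$'' as the trivial single-node map with value~$e_j$.

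For condition~4 your route is organized differently from the paper's, though the underlying pigeonhole is the same. You strip off the core, formulate a coreless auxiliary lemma about families $S(l,1),\dots,S(l,L)$, and prove it by induction on~$L$, recursing into a subtree in the case where some subtree is missed at level~$1$. The paper instead keeps the core, stays inside the fixed tree $T_L^k$, and introduces the predicate ``$X$ hits node~$n$'' (there is a leaf $l$ below $n$ at depth~$D$ with $X \cap \bigcup_{t \le D} S(l,t) \neq \emptyset$); it then shows that $X$ hits every leaf, that hitting all $k+1$ children of a node forces hitting the node (by exactly the disjointness of the $S(l_i,D{+}1)$ you use), and hence that $X$ hits the root, i.e.\ $X \cap C \neq \emptyset$. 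Your approach has the virtue that the auxiliary lemma is a clean standalone statement, at the cost of needing to observe (as you note) that the restriction to a subtree no longer has a common zeroth set, which is why you drop the core from the lemma. The paper's node-hitting formulation avoids this by never restricting to a subtree, trading the induction on~$L$ for a bottom-up propagation on nodes; it is marginally slicker but not more informative.
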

The proof consists of four lemmas, one for each of four properties of
a matryoshka sequence:

\begin{lemma}\label{lemma-p1}
  $H'_0 = H$.
\end{lemma}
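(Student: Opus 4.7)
The statement is immediate from the definition of the sequence $(H'_0, H'_1, \ldots)$ in Definition~\ref{definition-pseudo}. Recall that the definition treats the index $0$ as a special base case: it defines $H'_i$ in terms of level-$i$ pseudo-cores only for $i \ge 1$, and then separately stipulates $H'_0 = H$. So my plan is simply to unfold the definition and observe that the equality $H'_0 = H$ holds by fiat.

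There is no real content to extract and no obstacle to surmount here; the lemma is stated only to match the first bullet of Definition~\ref{definition-matryoshka} (that $M_0 = H$) and thereby to organize the proof of Theorem~\ref{theorem-pseudo} into four clean pieces, one per property of a matryoshka sequence. The substantive work will appear in the subsequent three lemmas, which must establish $d(H'_i) \le d(H) - i$, the containment $\operatorname{\mathit k\text{-}cores}(H'_i) \subseteq H'_{i+1}$, and the hitting-set-preservation property. Those are where the notion of pseudo-core needs to be exercised; Lemma~\ref{lemma-p1} itself is a one-line check.
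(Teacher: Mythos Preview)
Your proposal is correct and matches the paper's own proof, which is simply ``By definition.'' Both you and the paper observe that Definition~\ref{definition-pseudo} stipulates $H'_0 = H$ directly as the base case, so there is nothing more to check.
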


\begin{proof}
  By definition.
\end{proof}

\begin{lemma}\label{lemma-p2}
  $d(H'_L) \le d(H) - L$ holds for all $L \in \{0,\dots,d(H)\}$.
\end{lemma}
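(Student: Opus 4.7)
The plan is to unpack the definition of a pseudo-core: if $C$ is a $k$-pseudo-core of level $L$, then there is a $T_L^k$-pseudo-sunflower $S$ witnessing this, and I can read off an upper bound on $|C|$ directly from the structural properties of $S$. Since $H'_L$ is defined to have exactly the level-$L$ pseudo-cores as hyperedges, bounding each such $C$ bounds $d(H'_L)$.

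First I would fix an arbitrary hyperedge $C \in E(H'_L)$ and a witnessing pseudo-sunflower $S$, then pick any leaf $l \in \operatorname{leafs}(T_L^k)$. By property~2 of Definition~\ref{def-pseudo-core}, the set $S(l) = S(l,0) \cup S(l,1) \cup \dots \cup S(l,L)$ is a hyperedge of $H$, so $|S(l)| \le d(H)$. By property~1 we have $S(l,0) = C$, and by property~3 the sets $S(l,0), S(l,1), \dots, S(l,L)$ are pairwise disjoint, so this union is disjoint and
\[
  |S(l)| = |C| + \sum_{i=1}^{L} |S(l,i)|.
\]
Property~3 also forces $S(l,i) \neq \emptyset$ for each $i \in \{1,\dots,L\}$, so each term in the sum is at least $1$. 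Combining, $|C| \le |S(l)| - L \le d(H) - L$.

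Since $C$ was an arbitrary hyperedge of $H'_L$, this gives $d(H'_L) \le d(H) - L$, as required. I do not anticipate any real obstacle: everything is a one-step consequence of the definition, and the only reason the $T_L^k$-structure shows up at all is to force the $L$ non-empty, pairwise-disjoint ``petal-slices'' that push $|C|$ down by $L$ relative to $d(H)$.
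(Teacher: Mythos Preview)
Your proposal is correct and essentially identical to the paper's own proof: both pick a leaf $l$, use property~3 to write $S(l)$ as a disjoint union of $S(l,0)=C$ and $L$ nonempty pieces, and conclude $|C|\le |S(l)|-L\le d(H)-L$. The only cosmetic difference is that you spell out the cardinality equation explicitly, whereas the paper compresses it to a single line.
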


\begin{proof}
  For every leaf $l$ we have $S(l) =  S(l,0) \mathbin{\dot\cup} S(l,1)
  \mathbin{\dot\cup} \cdots \mathbin{\dot\cup} S(l,L)$ and all
  $S(l,i)$ for $i \in \{1,\dots,L\}$ are non-empty sets. This implies
  that $|S(l,0)| \le |S(l)| - L \le d(H) - L$.
\end{proof}

\begin{lemma}\label{lemma-p3}
  $\operatorname{\mathit k-cores}(H'_L) \subseteq H'_{L+1}$ holds for
  all $L \in \{0,\dots,d(H)\}$. 
\end{lemma}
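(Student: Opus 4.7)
The approach is to unfold the hierarchical structure of $T_{L+1}^k$ — whose root~$r$ has $k+1$ children $c_1,\dots,c_{k+1}$ and whose subtree rooted at each $c_i$ is isomorphic to $T_L^k$ — so that a $T_{L+1}^k$-pseudo-sunflower can be assembled by gluing together $k+1$ existing $T_L^k$-pseudo-sunflowers, one per petal of the sunflower in~$H'_L$. Concretely, let $C$ be a $k$-core of~$H'_L$, witnessed by petals $C_1,\dots,C_{k+1} \in E(H'_L)$ forming a sunflower with core~$C$; by Definition~\ref{def-pseudo-core} each $C_i$ already carries a $T_L^k$-pseudo-sunflower~$S_i$ for $H$ with pseudo-core~$C_i$.

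For a leaf~$l$ of $T_{L+1}^k$ lying in the subtree rooted at $c_{i(l)}$, with corresponding $T_L^k$-leaf~$l'$, I would define $S(l,0)=C$, $S(l,1)=C_{i(l)}-C$, and $S(l,j)=S_{i(l)}(l',j-1)$ for $j\ge 2$. The union of these sets telescopes to $C \cup (C_{i(l)}-C) \cup S_{i(l)}(l',1) \cup \dots \cup S_{i(l)}(l',L) = S_{i(l)}(l') \in E$, giving property~(2); property~(1) is immediate; and the non-emptiness part of property~(3) follows from $C_{i(l)} \supsetneq C$ (petals of a sunflower are proper supersets of the core) together with the non-emptiness in each $S_i$. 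Internal disjointness along a single path reduces to the disjointness guarantees of $S_{i(l)}$, once one notes that $C \subseteq C_{i(l)} = S_{i(l)}(l',0)$, so $S(l,0)$ and $S(l,1)$ fit inside $S_{i(l)}(l',0)$ and thereby inherit disjointness from the deeper levels.

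The main obstacle is property~(4), the branching-disjointness, but the construction has been arranged so that it splits cleanly on the divergence depth~$z$ of two distinct leaves~$l,m$. If $z=1$, the two leaves sit in different subtrees, so $S(l,1) \cap S(m,1) = (C_{i(l)}-C) \cap (C_{i(m)}-C) = \emptyset$ because $\{C_1,\dots,C_{k+1}\}$ is a sunflower with core~$C$, i.e.\ $C_{i(l)} \cap C_{i(m)} = C$. If $z \ge 2$, both leaves sit in the same subtree under some~$c_i$ and their underlying $T_L^k$-paths diverge for the first time at depth~$z-1$, so property~(4) applied to~$S_i$ yields $S(l,z) \cap S(m,z) = S_i(l',z-1) \cap S_i(m',z-1) = \emptyset$. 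The essence of the lemma is thus that the two disjointness mechanisms — the top-level sunflower property in~$H'_L$ and the branching property of each inner~$S_i$ — dovetail perfectly with the two cases $z=1$ and $z\ge 2$, and the attached~$S$ is indeed a $T_{L+1}^k$-pseudo-sunflower certifying $C \in E(H'_{L+1})$.
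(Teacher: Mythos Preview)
Your proposal is correct and follows essentially the same construction as the paper's own proof: glue the $k+1$ petal pseudo-sunflowers $S_i$ under a fresh root level, setting $S(l,0)=C$, $S(l,1)=C_{i(l)}-C$, and shifting the remaining levels by one, then verify property~(4) by splitting on whether the divergence depth is $z=1$ (handled by the sunflower core property) or $z\ge 2$ (handled by the inner $S_i$). The only cosmetic difference is that the paper phrases this as an induction on~$L$ and treats $L=0$ separately via Lemma~\ref{lemma-ph1}, whereas you invoke Definition~\ref{def-pseudo-core} uniformly; this is harmless provided one reads the degenerate $T_0^k$-pseudo-sunflower of an edge $e\in E(H)=E(H'_0)$ as simply $S(r,0)=e$.
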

\begin{proof}
  We show the claim by induction on~$L$. The base case $L=0$ was already handled in Lemma~\ref{lemma-ph1}. For
  larger~$L$, let $e$ be a $k$-core of $H'_L$. Then there is a
  sunflower $\{e_1,\dots,e_{k+1}\} \subseteq E(H'_L)$ with
  core~$e$ and $|e| \le d(H) - L - 1$ since all $e_i$ have the property
  $|e_i| \le d(H) - L$ by definition and since a core is always
  smaller than its largest petal. For each $j \in \{1,\dots,\penalty0k+1\}$
  there must now be a $T_L^k$-pseudo-sunflower $S_j$ with
  pseudo-core~$e_j$. From these $k+1$ different pseudo-sunflowers, we
  construct a $T_{L+1}^k$-pseudo-sunflower $S$ whose pseudo-core is~$e$ as
  follows: In the tree $T_{L+1}^k$, let $c_1$ to $c_{k+1}$ be the children of
  the root~$r$. For a leaf~$l$ of $T_{L+1}^k$, let $l^1 = c_j$ be the
  child of~$r$ on the path from $r$ to~$l$ and let us view $l$ also as
  a leaf in the tree $T_L^k$ rooted at $c_j$. We define $S$ as
  follows:
  \begin{enumerate}
  \item $S(l,0) = e$.
  \item $S(l,1) = e_j - e$.
  \item $S(l,i) = S_j(l,i-1)$ for $i \in \{2,\dots,L+1\}$.
  \end{enumerate}
  It remains to show that the mapping $S$ defined in this way
  satisfies the four properties of a pseudo-sunflower. Consider any
  two leafs $l$ and $m$ of $T_{L+1}^k$:
  \begin{enumerate}
  \item By definition, $S(l,0) = e$ and, thus, $e$ is the
    pseudo-core of $S$.
  \item $S(l,0) \cup S(l,1) \cup \dots \cup S(l,L+1) = e \cup (e_j - e)
    \cup S_j(l,1) \cup \dots \cup S_l(l,L) = e \cup (e_j - e)
    \cup (S_j(l)-e_j)$. We know that $S_j(l) \in E$ holds (since $S_j$
    is a pseudo-sunflower), that $e_j \subseteq S_j(l)$ holds (since
    $e_j$ is the pseudo-core of~$S_j$), and that $e \subseteq e_j$
    holds (since $e$ is the core of a sunflower that has $e_j$ as one
    of its petals). This implies $S(l) = S_j(l)$ and the latter is an
    element of~$E$.
  \item Clearly, $S(l,0)= e$ and $S(l,1) = e_j-e$ are disjoint and
    $S(l,1)$ has size at least $1$ since $e_j$ is not a subset
    of~$e$. The other $S(l,i)$ are also disjoint from one another since the
    $S_j(l,i-1)$ are disjoint among one another, and they are also
    disjoint from $S(l,0)$ and $S(l,1)$ (since all
    $S_j(l,i-1)$ are disjoint from $S_j(l,0) = e_j$).
  \item Finally, let $z$ be the ``divergence depth'' of $l$ and $m$,
    that is, the smallest number with $l^z \neq m^z$. For $z=1$, we
    have $S(l,1) \cap S(m,1) = \emptyset$ since  $S(l,1) = e_j - e$
    and $S(m,1) = e_{j'} -e$ for some indices $j$ and $j'$ and since
    the intersection of the two sunflower petals $e_j$ and $e_{j'}$ is
    exactly~$e$. For $z>1$, the leafs $l$ and $m$ lie in the same tree
    $T_L^k$ rooted at some child $c_j$ of~$r$ and we have $S(l,z)
    \cap S(m,z) = S_j(l,z-1) \cap S_j(m,z-1)$ and the latter
    intersection is empty since $S_j$ is a pseudo-sunflower.\qedhere 
  \end{enumerate}
\end{proof}


\begin{lemma}\label{lemma-p4}
  Every size-$k$ hitting set of~$H$ is also a size-$k$ hitting set of~$H'_L$
  for all $L \in \{0,\dots,d(H)\}$.  
\end{lemma}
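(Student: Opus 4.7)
The plan is to handle the trivial case $L=0$ first (where $H'_0 = H$ by definition, so every hitting set of $H$ hits every hyperedge of $H'_0$) and prove the substantive case $L \ge 1$ by contradiction. So fix a size-$k$ hitting set $X$ of $H$, a $k$-pseudo-core $C$ of level~$L$ in~$H$ witnessed by a pseudo-sunflower~$S$, and suppose for contradiction that $X \cap C = \emptyset$. Since $X$ hits the hyperedge $S(l) \in E$ for every leaf $l$ of $T_L^k$ and since $S(l,0) = C$ is disjoint from $S(l,1) \cup \dots \cup S(l,L)$ by property~3, we must have that $X$ hits $\bigcup_{i=1}^{L} S(l,i)$ for every leaf~$l$. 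The aim is then to deduce $|X| \ge k+1$, contradicting $|X| \le k$.

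To obtain that bound I would state and prove by induction on~$L$ an auxiliary claim that no longer mentions $C$ or the ambient hyperedges at all: given any family $\{S(l,i)\}$ indexed by leaves of $T_L^k$ and levels $i \in \{1,\dots,L\}$ that satisfies nonemptiness together with the disjointness properties~3 and~4 of a pseudo-sunflower (restricted to $i \ge 1$), every set $Y$ that hits $\bigcup_{i=1}^L S(l,i)$ for each leaf $l$ must satisfy $|Y| \ge k+1$. The base case $L=1$ is immediate: any two of the $k+1$ leaves diverge at depth~$1$, so the $k+1$ sets $S(l,1)$ are pairwise disjoint and hitting all of them requires $k+1$ elements.

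For the inductive step $L \ge 2$, let $c_1,\dots,c_{k+1}$ be the children of the root of $T_L^k$ and write $T_j$ for the subtree rooted at $c_j$. Property~4 forces $S(l,1) \cap S(m,1) = \emptyset$ whenever $l$ and $m$ lie in subtrees of different children (since they diverge at depth~$1$). Hence, if $Y$ intersected $\bigcup_{l \in T_j} S(l,1)$ for every $j \in \{1,\dots,k+1\}$, picking one representative per subtree would already yield $k+1$ pairwise distinct elements of~$Y$. Assuming $|Y| \le k$, there is therefore some $j^{*}$ with $Y$ disjoint from $S(l,1)$ for all leaves $l$ of $T_{j^{*}}$, forcing $Y$ to still hit $\bigcup_{i=2}^L S(l,i)$ for each such leaf. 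Defining $\tilde S(l,i) = S(l,i+1)$ on the subtree $T_{j^{*}} \cong T_{L-1}^k$, a routine bookkeeping check (divergence depth $z$ in $T_{L-1}^k$ corresponds to divergence depth $z+1$ in $T_L^k$) shows that $\tilde S$ still satisfies the hypotheses of the auxiliary claim at level $L-1$, and the inductive hypothesis then yields $|Y| \ge k+1$, the desired contradiction. Applying the auxiliary claim to $X$ and the original $S$ finishes the proof.

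The main obstacle I anticipate is choosing the right induction hypothesis. Inducting on the statement of Lemma~\ref{lemma-p4} directly does not work: once we descend into the subtree $T_{j^{*}}$, there is no canonical ``new pseudo-core'' — the natural candidate $C \cup S(l,1)$ depends on the leaf $l$, and the restricted unions $\bigcup_{i \ge 2} S(l,i)$ need not be hyperedges of~$H$. Stripping away both the core and the requirement that $S(l) \in E$, and inducting purely on the combinatorial disjointness structure of a pseudo-sunflower, is exactly what permits the argument to descend one level at a time.
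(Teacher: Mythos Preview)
Your proof is correct, but it is organized differently from the paper's. The paper does \emph{not} set up an auxiliary induction on~$L$; instead it introduces, for the fixed tree $T_L^k$, the predicate ``$X$ hits node~$n$'' (meaning some leaf~$l$ below~$n$ has $X$ meeting $S(l,0)\cup\dots\cup S(l,D)$, where $D$ is the depth of~$n$), observes that $X$ hits every leaf because $S(l)\in E$, and then shows that whenever $X$ hits all $k+1$ children of a node it must hit the node itself (otherwise the $k+1$ sets $S(l_i,D+1)$ would be pairwise disjoint by property~4 and each meet~$X$). Propagating from the leaves up yields that $X$ hits the root, i.e., $X\cap C\neq\emptyset$. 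Your argument is the contrapositive, top-down version of the same mechanism: you assume $X\cap C=\emptyset$, use property~4 at the root to find a child subtree whose level-$1$ slices miss~$X$, peel that level off, and recurse. The paper's formulation is slightly slicker in that it avoids restating an abstracted auxiliary claim, while your formulation makes explicit that only nonemptiness and property~4 are needed (property~3 for $i\ge 1$ is never actually invoked in your induction) and cleanly explains why one cannot induct on the lemma statement itself.
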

\begin{proof}
  We must show that every hitting set $X$ of $H$ with $|X| \le k$ is
  also a hitting set of $H'_L$. That is, we must show that every
  $k$-pseudo-core $C$ of level~$L$ contains at least one element
  of~$X$. Let $S$ be a $T_L^k$-pseudo-sunflower with pseudo-core~$C$.

  The following definition will be crucial in the following: We say
  that \emph{$X$ hits a node $n$ of $T_L^k$} if there is a leaf $l$ of
  $T_L^k$ such that $n = l^D$ (that is, $n$ is the $D$th node on the
  path $l^0,l^1,\dots,l^L$ from the root $r = l^0$ to $l = l^L$) and
  $X \cap (S(l,0) \cup S(l,1) \cup \dots \cup S(l,D)) \neq \emptyset$.

  First, observe that $X$ hits every leaf of $T_L^k$ since, for them,
  $D = L$ and $S(l,0) \cup \dots \cup \dots S(l,L) = S(l)$ is a
  hyperedge of $H$ and, hence, gets hit by~$X$.

  Second, we claim that if $X$ hits all children $c_1,\dots,c_{k+1}$
  of a node~$n$ of $T_L^k$, then $X$ also hits~$n$. Let $n$ be at
  depth $D$, so the children are at depth~$D+1$. By definition of
  ``being hit by $X$,'' for each child $c_i$ of~$n$ there must be a leaf
  $l_i$ such that
  \begin{align}
    X \cap (S(l_i,0) \cup \dots \cup S(l_i,D+1)) \neq \emptyset.\label{eq-1}
  \end{align}
  We claim that at least one of the $l_i$
  also witnesses that $X$ hits~$n$. Otherwise, since all $l_i$ are in
  the subtree rooted at $n$, for all of them we would have
  \begin{align}
    X \cap (S(l_i,0) \cup \dots \cup S(l_i,D)) =
    \emptyset.\label{eq-2}
  \end{align}
  From \eqref{eq-1} and \eqref{eq-2} we immediately get that
  \begin{align}
    X \cap S(l_i,D+1) \neq \emptyset \text{ must hold for all $i \in
    \{1,\dots,k+1\}$}.\label{eq-3}
  \end{align}
  Now, for any two
  different leafs $l_i$ and $l_j$ consider the two paths from the root to
  them. These paths will be identical exactly up to the node~$n$ and
  will then split into a path via the child $c_i$ and a path via the
  child $c_j$. Now, in this situation the fourth property of
  pseudo-sunflowers tells us that $S(l_i,D+1) \cap S(l_j,D+1) =
  \emptyset$ must hold. In other words, the $k+1$ many sets
  $S(l_i,D+1)$ in equation~\eqref{eq-3} \emph{are pairwise disjoint.}
  However, this means that the size-$k$ set~$X$ cannot contain one
  element of each of them. Thus, our assumption that $X$ does not hit
  $n$ has lead us to a contradiction.

  Third, we claim that $X$ hits the root of~$T_L^k$. This follows
  easily from the first two claims since $X$ hits all leafs of $T_L^k$ and
  whenever it hits all children of a node, it also hits the
  node. Clearly, this implies that $X$ hits all nodes, including the
  root.

  Now, we are done since ``$X$ hits the root'' means that $X \cap
  S(l,0) \neq \emptyset$ holds for at least one leaf~$l$ and $S(l,0) =
  C$. Thus, $X \cap C \neq \emptyset$, which was the claim.
\end{proof}

\paragraph*{Step 2: Pseudo-Cores Can Be Computed in Constant Depth.}

Theorem~\ref{theorem-pseudo} states that the hypergraphs $H'_i$ form
a matryoshka sequence and, thus, the Kernel Theorem tells us that
the following hypergraph is a kernel
for the hitting set problem:
\(
  K = (H'_0 \ominus H'_1) \cup (H'_1 \ominus H'_2) \cup \dots \cup (H'_{d(H)-1} \ominus
  H'_{d(H)}) \cup H'_{d(H)}. 
\)
Of course, the whole effort that went into the definition of the
$H'_i$ and the proof of the matryoshka properties would be for
nothing, if the $H'_i$ were not easier to compute than the~$H_i$.

This is exactly what we claim in the following theorem and prove in the
rest of this paper: It is an analogue of Lemma~\ref{lemma-ac0-hi} for
pseudo-cores. The crucial difference in 
the formulation is that, now, we no longer get $H'_{i-1}$ as input
when we compute $H'_i$, but rather we compute $H'_i$ ``directly'' from
the original graph~$H$.

\begin{theorem}[Computing Pseudo-Cores in Constant Depth]\label{thm-ac0-pseudo}
  There is a \textsc{dlogtime}-uniform family of $\Class{AC}$-circuits that
  {\begin{enumerate}
  \item on input of the incidence matrix of a hypergraph $H = (V,E)$
    and numbers~$k$ and~$L$,
  \item outputs the refinement matrix encoding of $H'_L$,
  \item has constant depth (in particular, it is independent of
    $|V|$, $|E|$, $d(H)$, $k$, and $L$), and
  \item has size $f(k,d(H)) \cdot |V|^{O(1)} |E|^{O(1)}$ where $f$ is
    some computable function.
  \end{enumerate}}
\end{theorem}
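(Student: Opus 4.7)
I follow the pattern of Lemma~\ref{lemma-ac0-hi}: the circuit decides in parallel, for every candidate set $e'$ that is a subset of some $e\in E$ (the entries of the output refinement matrix), whether $e'$ is a $k$-pseudo-core of level~$L$ in $H$. Since properties~2 and~3 of pseudo-sunflowers force $|e'|+L\le d(H)$, we may assume $L\le d(H)$, so any quantity depending on $L$ is absorbed into $f(k,d(H))$.

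The key idea is to witness an entire $T_L^k$-pseudo-sunflower with a single global application of the Color Coding Lemma (Fact~\ref{fact-cc}) using the color space
\[
  C \;=\; \prod_{z=1}^{L}\bigl(N(z)\cup\{\bot\}\bigr),
\]
where $N(z)$ denotes the set of depth-$z$ nodes of $T_L^k$. Note $|C|\le (k+2)^{L(L+1)/2}$ depends only on $k$ and $d(H)$. For an actual pseudo-sunflower $S$ with pseudo-core $e'$, the \emph{intended coloring} $\lambda_S$ assigns to $v\in V$ the vector whose $z$-th coordinate equals $l^z$ if $v\in S(l,z)$ for some leaf~$l$, and $\bot$ otherwise; property~4 makes this well-defined because whenever $v\in S(l,z)\cap S(m,z)$ the paths to $l$ and $m$ cannot diverge at depth~$z$, so $l^z=m^z$.

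Given $e'$ and a coloring $\lambda$ from the set $\Lambda$ produced by Fact~\ref{fact-cc} (applied with $c=|C|$ and up to $d(H)(k+1)^L$ ``in mind'' vertices), the circuit accepts if for every leaf $l$ of $T_L^k$ there exist a hyperedge $h_l\in E$ with $h_l\supseteq e'$ and a surjection $z_l\colon h_l\setminus e'\to\{1,\dots,L\}$ satisfying $(\lambda(v))_{z_l(v)}=l^{z_l(v)}$ for all $v\in h_l\setminus e'$. This test has constant depth and can be expressed as an $\lor\text{-}\land\text{-}\lor\text{-}\land$ circuit of size polynomial in $|V|$ and $|E|$ times a function of $k$ and $d(H)$ (at most $(k+1)^L$ leaves, $|E|$ candidates for $h_l$, and $L^{d(H)}$ maps $z_l$). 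Soundness: setting $S(l,0)=e'$ and $S(l,i)=z_l^{-1}(i)$ yields a pseudo-sunflower, with property~4 holding because $v\in S(l,z)\cap S(m,z)$ would force $(\lambda(v))_z=l^z=m^z$. Completeness: Fact~\ref{fact-cc} guarantees some $\lambda\in\Lambda$ agrees with $\lambda_S$ on the vertices used by a witnessing pseudo-sunflower $S$, and then $h_l=S(l)$ together with the $z_l$ induced by $S$ satisfies the test.

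I expect the main obstacle to be identifying the right color space. The tempting choice of coloring each vertex by an entire root-to-leaf path fails because a vertex $v$ may legitimately appear in $S(l,1)$ and in $S(m,2)$ for completely unrelated leaves $l$ and $m$, so no single leaf-assignment to $v$ is consistent. Decoupling the coordinates across depths $z$ resolves this, while the divergence property still pins down a unique depth-$z$ ancestor at each depth where $v$ is used -- exactly the information color coding needs in order to enforce disjointness in property~4 simultaneously at every divergence depth, thereby ``collapsing'' what would otherwise be $L$ rounds of coloring into one.
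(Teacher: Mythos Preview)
Your soundness argument is correct, but the completeness direction has a genuine gap: the ``intended coloring'' $\lambda_S$ is not well-defined. You argue that if $v\in S(l,z)\cap S(m,z)$ then property~4 forbids the paths to $l$ and $m$ from diverging \emph{at} depth~$z$, and conclude $l^z=m^z$. But property~4 only constrains the \emph{single} divergence depth; the paths may have diverged at some $z^*<z$, in which case $l^z\neq m^z$ while $S(l,z)\cap S(m,z)$ is completely unconstrained. The paper's own running example (Figure~\ref{fig:example-pseudo}) already exhibits this: the vertex~$l\in V$ lies in $S(c_{13},2)=\{p,q,l,e\}$ and in $S(c_{31},2)=\{l\}$, where the two leaves $c_{13},c_{31}$ diverge at depth~$1$; their depth-$2$ ancestors are $c_{13}\neq c_{31}$, so $(\lambda_S(l))_2$ would have to be both. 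One checks that no choice of alternative witnesses $(h_l,z_l)$ rescues this instance either, so your test is strictly stronger than ``$e'$ is a pseudo-core'' and completeness fails.

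This is exactly the obstacle the paper isolates in Example~\ref{example-ex1}: color coding can enforce disjointness, but a single global coloring cannot simultaneously leave ``don't care'' overlaps unconstrained. Your product color space decouples the depths, but within a fixed depth~$z$ you are still forcing a single node-label on each vertex, which over-constrains pairs of leaves that have already branched. The paper's fix is to abandon the idea of coloring vertices of~$H$ directly and instead build an auxiliary graph~$G$ with a \emph{separate} copy of each slot $(l,i)$ for every leaf~$l$; the only edges in~$G$ are those corresponding to actual disjointness requirements (property~3 within a leaf, property~4 at the divergence depth). One then searches for a map $c\colon V(G)\to V(H)$ that is a proper coloring of~$G$ and sends each block $U_l$ to a hyperedge. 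The subtle part---handled in Lemma~\ref{lemma-restricted-cc}---is that this is not directly a color-coding question (the target is $V(H)$, not a small color set); the paper resolves it by enumerating all proper colorings $c'\colon V(G)\to\{1,\dots,|V(G)|\}$ and, for each, color-coding a map $d\colon V(H)\to\{1,\dots,|V(G)|\}$ compatible with~$c'$. Your proposal would need an analogous layer of indirection to accommodate the ``don't care'' overlaps.
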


To compute the encoding of~$H'_L$,
we can consider all candidate pseudo-cores in parallel. Thus, proving
the theorem boils down to deciding for a subset $C \subseteq V$
whether there exists a $T_L^k$-pseudo-sunflower $S$ of~$H$ whose
pseudo-core is~$C$. Of course, we wish to use color coding for
this and our definition of pseudo-cores and pseudo-sunflowers was
carefully crafted so that it includes only requirements of the form
``these parts of these hyperedges must be disjoint'' (and not -- as is
necessary for describing cores of cores -- statements like ``these 
hyperedges must \emph{share} the vertices that form petals'').
Unfortunately, while we no longer need to \emph{ensure} that certain
parts of different hyperedges are identical, we must be careful
that we do not inadvertently \emph{forbid} vertices to be the same
across hyperedges when we ``do not care whether they are
the same'':
\begin{example}\label{example-ex1}
  Suppose we wish to find two disjoint hyperedges $e_1 = \{v_1,v_2,v_3\}$
  and $e_2 = \{v_4,v_5,v_6\}$ in a hypergraph~$H$ plus another
  hyperedge $e_3 = \{x,y\}$ such that $x \notin e_1 \cup e_2$, but do
  not care whether $y \in e_1\cup e_2$ holds or not. We can easily
  enforce the disjointness properties 
  by coloring $v_1$ to $v_6$ using colors $1$ to~$6$ and $x$ using
  color~$7$. However, how should we color $y$ for which 
  \emph{we do not care about disjointness} (at least with respect to
  $e_1$ and $e_2$)? Fixing any of the colors $1$ to~$3$ for~$y$ or  
  any of the colors $4$ to~$6$ (or, for that matter, any other color)
  would be wrong, since this would enforce either $y \notin e_2$ or $y
  \notin e_1$ (or both).  
\end{example}
Fortunately, there is a way out of the dilemma: we consider
all feasible colors~$y$ could get in parallel. To formalize this
``trick'', we define a 
technical problem in which an undirected graph~$G$ is used to
specify which vertices in hyperedges of a hypergraph~$H$ should be 
different. As is customary, a \emph{proper
  coloring} of an undirected graph $G = (U,F)$   
is a mapping $c \colon U \to C$ to some set $C$ of colors with $c(u)
\neq c(v)$ for all $\{u,v\} \in F$. Let us write $f[X] = \{f(x) \mid x
\in X\}$ for the image of a  
set $X$ under a function~$f$. For an example instance see
Figure~\ref{fig:example-restricted}. 

\begin{problem}{$\PLang[G]{restricted-coloring}$}\label{technical-problem}
 \begin{description}\parskip0pt\itemsep0pt
    \item[Instance:] A hypergraph $H=(V,E)$ and an undirected graph $G =
    (U,F)$ together with a partition $U = U_1 \mathbin{\dot\cup}
    \cdots \mathbin{\dot\cup} U_m$ of~$U$.  
    \item[Parameter:] $|G|$
    \item[Question:] Is there a proper coloring $c \colon U \to V$ of~$G$ such
    that $c[U_i] \in E$ holds for all $i \in \{1,\dots,m\}$?
  \end{description}
\end{problem}

\begin{figure}[htpb]
  \begin{tikzpicture}[baseline]

    \node [below right] at (0mm,4.8cm) {$H'$};
    
    \verticesup
    \scoped[r/.style=,g/.style=,b/.style=]{
      \hyperedgeshup
    }

  \end{tikzpicture}
  \hfill  
  \begin{tikzpicture}[baseline]
    \node [below right] at (10mm,4.8cm) {$G$};

    \foreach \l in {1,...,9} {
      \foreach \i in {1,2} {
        \foreach \d in {1,...,4} {
          \node [dot] (\l\i\d) at (\i*1.5cm+\d*3mm,14em-\l*1.5em) {};
        }
      }
    }

    \node [above right] at ([xshift=4cm,yshift=2mm]111) {\emph{Proper coloring}};

    \draw [densely dashed,rounded corners=1mm] ([shift={(-2mm,2mm)}]111) rectangle ([shift={(2mm,-2mm)}]314);
    \draw [densely dashed,rounded corners=1mm] ([shift={(-2mm,2mm)}]411) rectangle ([shift={(2mm,-2mm)}]614);
    \draw [densely dashed,rounded corners=1mm] ([shift={(-2mm,2mm)}]711) rectangle ([shift={(2mm,-2mm)}]914);

    \coordinate (x) at ([xshift=-2mm]211);
    \coordinate (y) at ([xshift=-2mm]511);
    \coordinate (z) at ([xshift=-2mm]811);

    \draw [very thick] (x) to[bend right] (y);
    \draw [very thick] (y) to[bend right] (z);
    \draw [very thick] (x) to[bend right] (z);
    
    \foreach \l in {1,...,9} {
      \foreach \d in {1,...,4} {
        \foreach \e in {1,...,4} {
          \draw [thin] (\l1\d) to [bend left=20] (\l2\e);
        }
      }
    }
    
    \foreach \l/\m in {1/2,2/3,4/5,5/6,7/8,8/9} {
      \foreach \d in {1,...,4} {
        \foreach \e in {1,...,4} {
          \draw [thin] (\l2\d) -- (\m2\e);
        }
      }
    } 
    \foreach \l/\m in {1/3,4/6,7/9} {
      \foreach \d in {1,...,4} {
        \foreach \e in {1,...,4} {
          \draw [thin] (\l2\d) to[bend left=10] (\m2\e);
        }
      }
    } 

    \draw[thick,black!50] (5.5,-0.25) -- (5.5,4.75);
    
    \foreach \l [count=\i] in {
      {{c,f,f,f},{u,v,w,w}},
      {{c,g,g,g},{r,s,t,m}},
      {{c,h,o,o},{p,q,l,e}},
      {{d,i,i,i},{o,r,u,u}},
      {{d,j,j,j},{p,s,v,v}},
      {{d,d,d,d},{k,q,t,w}},
      {{e,e,e,e},{l,l,l,l}},
      {{e,e,e,e},{m,m,m,m}},
      {{e,e,e,e},{n,n,n,n}}%
    }
    {
      \foreach \c [count=\j] in \l {
        \foreach \s [count=\k] in \c {
          \node [font=\footnotesize] at ([xshift=4cm]\i\j\k) {$\s$};
        }
      }
    }
  \end{tikzpicture}
  \caption{An instance of $\PLang[G]{restricted-coloring}$ consisting
    of a hypergraph~$H'$ and a graph~$G$ (a thick edge connecting two
    areas with dashed borders indicates that there is an edge between
    each vertex of the first area and each vertex of the second area;
    thus, in the example, each thick edge corresponds to
    $12 \cdot 12 = 144$ edges). This instance is the one resulting
    from the reduction described in the proof of
    Theorem~\ref{thm-ac0-pseudo} for $L=2$, the hypergraph~$H$ from
    Figure~\ref{fig:example1}, and the core $\{a,b\}$ (except that we
    use only four vertices in $G$ per set $S(l,i)$ instead of
    $d=9$). A proper coloring is shown right (the table indicates the
    values $c(u) \in V(H')$ for the corresponding vertices $u$ of~$G$).  }
  \label{fig:example-restricted}
\end{figure}
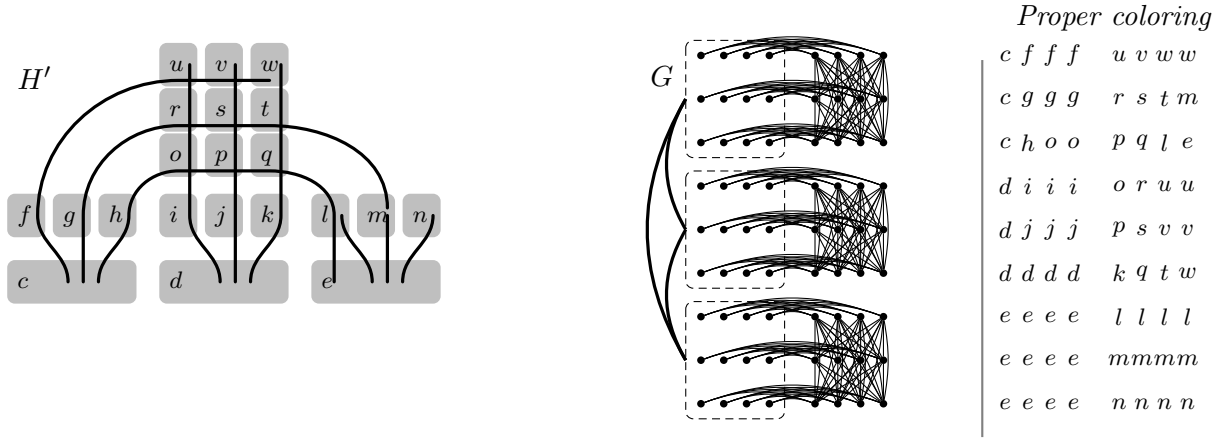

\begin{lemma}\label{lemma-restricted-cc}
  The problem $\PLang[G]{restricted-coloring}$ can be
  solved by a \textsc{dlogtime}-uniform family of
  $\Class{AC}$-circuits of constant depth and size $f(|G|) |V|^{O(1)}
  |E|^{O(1)}$ for some computable function~$f$.
\end{lemma}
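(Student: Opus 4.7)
The plan is to use the Color Coding Lemma (Fact~\ref{fact-cc}), exploiting the \emph{disjointness of the partition} $U = U_1 \mathbin{\dot\cup} \cdots \mathbin{\dot\cup} U_m$: this is the property that will cause the per-part existence checks to decouple once a suitable coloring of $V$ is fixed. The worry raised by Example~\ref{example-ex1} about color coding being unable to enforce ``vertices must coincide'' does not become an obstacle, because our reformulation will not require such coincidences across different parts.

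First I would invoke the Color Coding Lemma with $n = |V|$, $k = |U|$, and $c = |U|$ to obtain a family $\Lambda$ of colorings $\lambda \colon V \to \{1,\dots,|U|\}$ of size $f_0(|G|) \cdot |V|^{O(1)}$; the trivial edge case $|V| < |U|$ can be handled by direct enumeration and is set aside. The intended role of $\lambda$ is to be \emph{injective on the image} $c[U]$ of any hypothetical solution $c$: since $|c[U]| \le |U|$, padding with arbitrary extra targets if necessary, the lemma guarantees some $\lambda^* \in \Lambda$ that colors the distinct vertices of $c[U]$ with pairwise distinct colors.

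Next I would establish the equivalence: a valid $c$ exists iff there are $\lambda \in \Lambda$ and a pattern $\sigma \colon U \to \{1,\dots,|U|\}$ such that (i)~$\sigma(u) \neq \sigma(u')$ for every $\{u,u'\} \in F$, and (ii)~for every $i \in \{1,\dots,m\}$, some $e \in E$ satisfies $\lambda[e] = \sigma[U_i]$ and $\lambda|_e$ is injective. For the forward direction, take $\sigma = \lambda^* \circ c$ and $e_i = c[U_i]$; injectivity of $\lambda^*$ on $c[U]$ yields both (i) (via properness of $c$) and the injectivity of $\lambda|_{e_i}$. For the backward direction, since $\lambda|_{e_i} \colon e_i \to \sigma[U_i]$ is a bijection, define $c(u)$ for $u \in U_i$ to be the unique vertex of $e_i$ of color $\sigma(u)$; the disjointness of the $U_i$ makes these per-part choices assemble into a well-defined $c$, $c[U_i] = e_i \in E$ follows from $\lambda[e_i] = \sigma[U_i]$, and properness follows from (i) since $\lambda$ distinguishes different $\sigma$-values.

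Finally the characterization unfolds directly into a constant-depth $\Class{AC}$-circuit: an OR over $\lambda \in \Lambda$ nested with an OR over the $|U|^{|U|}$ candidate patterns $\sigma$; a constant-depth AND checking $F$-properness of $\sigma$; and for each $i$, a $\bigvee_{e \in E}$ that verifies $\lambda[e] = \sigma[U_i]$ colorwise (for each $j \in \{1,\dots,|U|\}$, ``some $v \in e$ has $\lambda(v) = j$'' iff ``some $u \in U_i$ has $\sigma(u) = j$'') and injectivity of $\lambda|_e$ (as a conjunction over pairs $v \neq v' \in V$). Since no counting beyond comparisons of quantities already bounded by $|U| \le |G|$ is needed, no exponent-in-parameter blowup occurs in $|V|$ or $|E|$, and the total size evaluates to $f(|G|) \cdot |V|^{O(1)} \cdot |E|^{O(1)}$ as claimed. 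The design choice I expect to require the most care is insisting that $\lambda$ be injective on the intended image: without this, one would be forced into a partial-surjection counting condition of the form $|\{v \in e : \lambda(v) = j\}| \le |\sigma^{-1}(j) \cap U_i|$, whose parameter-dependent exact-count test seems incompatible with constant-depth circuits of polynomial size; the injectivity trick ensures every vertex-to-color mapping is one-to-one, removing this entire complication.
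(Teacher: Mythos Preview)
Your proposal is correct and follows essentially the same approach as the paper's proof: your $\lambda$ and $\sigma$ are precisely the paper's $d$ and $c'$, your conditions (i) and (ii) match the paper's Properties~1 and~2 (in particular, ``$\lambda|_e$ injective'' is the paper's $|d[e_i]| = |e_i|$), and both directions of the equivalence are argued the same way via $\sigma = \lambda \circ c$ forward and ``unique preimage in $e_i$'' backward. The only cosmetic difference is that the paper first states the equivalence for an arbitrary mapping $d\colon V \to \{1,\dots,|U|\}$ and then invokes color coding, whereas you fold the color-coding family $\Lambda$ directly into the characterization.
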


\begin{proof}
  Let $G= (U,F)$ and $H = (V,E)$ be given as input. We assume that
  $|U| \le |V|$ holds since, otherwise, the number of possible
  mappings $c$ that must be checked only depends on the
  parameter~$|G|$ and, thus, they can easily be checked in parallel in
  constant depth (any function can be computed in constant depth when
  circuit size is not an issue).  

  The objective is, of course, to use color coding to find the
  mapping~$c$. Towards this aim, we search for a mapping $d \colon V
  \to \{1,\dots,|U|\}$ with the following two Properties 1 and~2:
  \begin{enumerate}
  \item There is a proper coloring $c' \colon U \to \{1,\dots,|U|\}$ of~$G$
    such that 
  \item for each $i\in \{1,\dots,m\}$ there is a hyperedge $e_i \in E$
    with $|d[e_i]| = |e_i|$ and  $d[e_i] = c'[U_i]$. 
  \end{enumerate}

  The idea behind this search is the following: The graph $G$ imposes
  restrictions of the form ``for the edge $\{u,v\}$ the vertices
  $c(u)$ and $c(v)$ must be different.'' In principle, such
  restrictions are easy to enforce using color coding: use one color
  for $c(u)$ and a different color for $c(v)$. However, as explained
  in Example~\ref{example-ex1}, we may run into a problem when there
  is \emph{no} edge between two vertices $x$ and~$y$ since, then, we
  may not rule out that $c(x) = c(y)$ holds -- which we would rule
  out when two different colors are used. The solution offered
  following the example is to try out all possible ways in which we
  may assign colors to vertices for which we ``actually do not care''
  about their colors. These ``possible ways'' are modeled by the proper
  coloring~$c'$ from above.

  In detail, recall the situation of Example~\ref{example-ex1} where
  we searched for two disjoint hyperedges $e_1 = \{v_1,v_2,v_3\}$ and
  $e_2 = \{v_4,v_5,v_6\}$ and a hyperedge $e_3 = \{x,y\}$ with $x
  \notin e_1 \cup e_2$. This search can be modeled by a graph~$G$
  whose vertex set $U$ can be partitioned into $U_1 =
  \{u_1,u_2,u_3\}$, $U_2=\{u_4,u_5,u_6\}$, and $U_3 = \{u_7,u_8\}$ and
  where the edge set~$F$ is a clique on $U_1 \cup U_2$ (to ensure that
  the hyperedges $c[U_1]$ and $c[U_2]$ are disjoint and have size~$3$)
  and there are edges between $u_7$ and all elements of $U_1 \cup U_2$ (to
  ensure that $c(u_7) \notin c[U_1] \cup c[U_2]$). A proper coloring $c
  \colon U \to V$ might now map $c(u_i) = v_i$ for
  $i\in\{1,\dots,6\}$ and $c(u_7) = x$ and $c(u_8) = y$. All vertices
  in $\{v_1,\dots,v_6,x\}$ must be distinct, but $y$ must not
  necessarily be distinct from them~--~in fact, it could be any of them.
  In this situation, the different possible values of~$y$ give
  rise to different $c'$ and~$d$ (note that in all of these examples,
  $c'$ is a proper coloring of~$G$ and that $|d[e_i]| = |e_i|$ and
  $d[e_i] = c'[U_i]$ hold for $i\in\{1,2,3\}$):

  \begin{example}\label{example-y1}
    Suppose that $y=v_1$ holds, that is, $e_1 \cap e_3 = \{v_1\}$ and
    $e_2 \cap e_3 = \emptyset$. This situation is modeled by the 
    following functions $c'$ and~$d$: $c'(u_i) = i$ for
    $i\in\{1,\dots,7\}$ and $c'(u_8) = 1$; and $d(v_i) = i$ for
    $i\in\{1,\dots,6\}$ and $d(x) = 7$ and $d(v)$ can be arbitrary for
    $v\notin \{v_1,\dots,v_6,x\}$ (note that we do not need to define
    $d(y)$ since $d(v_1)$ is already defined and $y = v_1$).  
  \end{example}
  \begin{example}\label{example-y5}
    Suppose that $y=v_5$ holds, that is, $e_1 \cap e_3 = \emptyset$ and
    $e_2\cap e_3 = \{v_5\}$. Here, we can use almost identical
    functions $c'$ and~$d$ as in the previous example, except that
    $c'(u_8) = 5$.  
  \end{example}
  \begin{example}\label{example-y*}
    Suppose that $y
    \notin \{v_1,\dots,v_6,x\}$. Then we use $c(u_8) = 8$ and $d(y) =
    8$.
  \end{example}
  
  Let us now formally argue that the search for $d$ can be performed
  using color coding: 
  First, observe that the test ``there is a proper coloring $c'$'' can be
  performed in parallel by testing all possible colorings of~$G$
  (their number depends only on $|G|$). Second, given a mapping
  $d \colon V \to \{1,\dots,|U|\}$, we can determine the existence of
  hyperedges $e_i \in E$ with both $|d[e_i]| = |e_i|$ and $d[e_i] =
  c'[U_i]$ in constant depth.  Third, if $d \colon V \to
  \{1,\dots,|U|\}$ has Properties 1 and~2, so does any other $d'$ as
  long as it is identical to~$d$ on the vertices of $\bigcup_{i=1}^m
  e_i$. Since the number of vertices in $\bigcup_{i=1}^m e_i$ is at
  most $|U|$, which depends 
  only on the parameter~$|G|$, instantiating Fact~\ref{fact-cc} with
  $\{x_1,\dots,x_k\} = \bigcup_{i=1}^m e_i$ tells us that we can find
  one such $d'$ in constant depth.

  It remains to argue that the following two statements are equivalent:
  \begin{itemize}
  \item 
    There is a mapping $d\colon V \to \{1,\dots,|U|\}$ with the Properties
    1 and~2. 
  \item
    There is a mapping $c\colon U\to V$  such
    that is a proper coloring of $G$ and $c[U_i] \in E$ holds for all
    $i \in \{1,\dots,m\}$. 
  \end{itemize}

  For the first direction, let $d\colon V \to
  \{1,\dots,|U|\}$ be a mapping and $c' \colon U \to \{1,\dots,|U|\}$
  a proper coloring of~$G$ such that for each $i\in \{1,\dots,m\}$
  there is a hyperedge $e_i \in E$  with $|d[e_i]| = |e_i|$ and  $d[e_i] =
  c'[U_i]$. Define $c \colon U \to V$ as follows:
  We know that each $u \in U$ lies in a unique set~$U_i$, and since
  $d[e_i] = c'[U_i]$ for the hyperedge $e_i\in E$,
  there must be an element $v_u \in e_i$ with $d(v_u) =
  c'(u)$. Since $|d[e_i]| = |e_i|$, the element $v_u \in e_i$
  must be unique with this property. Define $c(u) =
  v_u$ and observe that $d \circ c = c'$ holds. (As an example,
  consider the functions $c'$ and~$d$ from Example~\ref{example-y5}
  and $u_5 \in U$ and $u_8 \in U$. For $u_5$, we have $u_5 \in U_2 =
  \{u_4,u_5,u_6\}$ and $d[e_2] = c'[U_2] = \{4,5,6\}$. The element
  $v_5$ has the property $d(v_5) = 5$ and, indeed, it is the only
  element with this property. Thus $c(u_5) = v_5$. For $u_8$, we have
  $u_8 \in U_3 = \{u_7,u_8\}$ and $d[e_3] = c'[U_3] = \{7,5\}$. Again,
  the element $v_5$ has the property $d(v_5) = 5$ and $c(u_8) = v_5$.) 

  The just-defined mapping~$c$ is a proper coloring of~$G$ since
  for every edge $\{u,v\} \in F$ we have $d(c(u)) = c'(u) \neq c'(v) = 
  d(c(v))$, which implies $c(u) \neq c(v)$. To prove $c[U_i] \in E$
  for $i \in \{1,\dots,m\}$, fix some~$i$. Since $d \circ c = c'$, we
  also have $d[c[U_i]] = c'[U_i]$ and this equals $d[e_i]$ 
  by assumption. By construction of~$c$, we clearly also have $c[U_i]
  \subseteq e_i$. Finally, since $|d[e_i]| = |e_i|$, we know that $d$
  is injective on~$e_i$. Now, from $d[c[U_i]] = d[e_i]$ and
  $c[U_i] \subseteq e_i$ and $d$ being injective on~$e_i$, we conclude
  that $c[U_i] = e_i \in E$ must hold. 

  For the second direction, let $c \colon U \to V$ be a proper coloring of~$G$
  with $c[U_i] \in E$ for all $i\in\{1,\dots,m\}$. We need to
  construct a mapping $d \colon V \to \{1,\dots,|U|\}$ and a proper
  coloring $c' \colon U \to \{1,\dots,|U|\}$ with the property that
  for each $i\in \{1,\dots,m\}$ there is a hyperedge $e_i \in E$ with
  $|d[e_i]| = |e_i|$ and  $d[e_i] = c'[U_i]$.

  For the definition of $d$, let $b \colon c[U] \to
  \{1,\dots,|c[U]|\}$ be a bijection. Define $d$ as follows: 
  \begin{align*}
    d(v) = \begin{cases}
      b(v) & \text{for $v\in c[U]$ and}\\
      1 & \text{otherwise.}
    \end{cases}
  \end{align*}
  Let $c' = d \circ c$. First, $c'$ is a proper coloring of~$G$ since
  $c$ is a proper coloring and $d$ restricted to $c[U]$ is a
  bijection. Now consider an $i \in \{1,\dots,m\}$. Then $e_i = 
  c[U_i]$ is a hyperedge in $E$. We have $|d[e_i]| = |e_i|$ since
  $d$ restricted to $e_i \subseteq c[U]$ is a bijection. We have
  $d[e_i] = d[c[U_i]] = c'[U_i]$ by definition of~$c'$. 
\end{proof}

We are now ready to prove Theorem~\ref{thm-ac0-pseudo}:

\begin{proof}[Proof of Theorem~\ref{thm-ac0-pseudo}]
  Recall that in order to prove the claim, we must  solve the
  following problem in constant depth: Given $H = (V,E)$, $k$, $L$,
  and a subset $C \subseteq e \in E$, check whether there exists a
  $T_L^k$-pseudo-sunflower $S$ of $H$ whose pseudo-core is exactly~$C$.
  We must now show how the existence of the pseudo-sunflower can be
  checked using the technical problem $\PLang[G]{restricted-coloring}$.

  The input for the restricted coloring problem will consist of a
  special graph~$G$ that encodes the different disjointedness properties
  of pseudo-sunflowers using edges and will consist of the hypergraph $H' = \bigl(V, \{\,e
  - C \mid e \supseteq C, e \in E\,\}\bigr)$. In other words, we
  restrict $H$ to those edges that contain the alleged core~$C$ (other
  edges cannot be part of the sought pseudo-sunflower anyway) and we
  remove the core from the edge since they all contain it.

  Let us now define the graph $G = (U,F)$. The objective of this
  definition is, of course, that there is a $T_L^k$-pseudo-sunflower~$S$
  with core $C$ if, and only if, there a proper coloring $c \colon U
  \to V$ of~$G$ such that $c[U_i] \in E(H')$ for all $i \in
  \{1,\dots,m\}$.

  Pseudo-sunflowers are mappings from
  $\operatorname{leafs}(T_k^L) \times \{0,\dots,L\}$ to subsets of~$V$
  such that for each leaf~$l$ the union $S(l,0) \cup 
  \dots \cup S(l,L)$ is a hyperedge in~$E$. In our case, we must have $S(l,1) \cup
  \dots \cup S(l,L) \in E(H')$ since $S(l,0) = C$ and we removed the
  fixed core~$C$ already from the hyperedges of~$H'$. In~$G$, we will
  have one set $U_l$ for each leaf~$l$ of $T_L^k$: 
  The vertices that will be assigned to the elements of $U_l$ by the
  coloring~$c$ should then form exactly the hyperedge $S(l,1) \cup
  \dots \cup S(l,L)$.

  If we knew that each $S(l,i)$ had size exactly~$1$, we could set $U
  = \operatorname{leafs}(T_k^L) \times \{1,\dots,L\}$: For each
  leaf~$l$ the coloring $c$ would need to pick $L$ vertices which,
  together, make up the hyperedge $S(l) - C$ of~$H'$. To ensure that
  $S(l,i)$ and $S(l,j)$ are disjoint for $i\neq j$, we would make each
  $\{l\} \times \{1,\dots,L\}$ a clique in~$G$. However, the
  sets $S(l,i)$ can have different sizes. For this reason, we do not
  use a single vertex in~$G$ for each $S(l,i)$, but $d$ different
  vertices (actually, $d-L+1$ vertices would suffice): The different
  elements of $S(l,i)$ can be represented by different vertices -- and
  if $|S(l,i)| < d$, the coloring $c$ can map the superfluous vertices
  to any of the vertices of $S(l,i)$.

  We set $U = \operatorname{leafs}(T_k^L) \times \{1,\dots,L\}
  \times \{1,\dots,d\}$ and define the partition of $U$ by $U_l =
  \{l\} \times \{1,\dots,L\} \times \{1,\dots,d\}$ for each $l \in
  \operatorname{leafs}(T_L^k)$. It remains to explain how we put edges
  into~$G$ such that the colorings of $G$ induce
  pseudo-sunflowers. The following edges are present in $G$ to ensure
  the four properties from Definition~\ref{def-pseudo-core}:
  \begin{enumerate}
  \item Nothing needs to be done to ensure the first property ($S(l,0)
    = C$) since $H'$ only contains hyperedges that used to contain~$C$. 
  \item Nothing needs to be done to ensure the second property ($S(l)
    \in E$) since $c[U_l] \in E(H')$ will ensure that $S(l) - C \in
    E(H')$ holds and, thus, $S(l) \in E(H)$.
  \item To ensure the third property ($S(l,i) \cap S(l,j) = \emptyset$
    for $i \neq j$), for each $l \in \operatorname{leafs}(T_L^k)$ and
    every $i \neq j$ and all $x,y \in \{1,\dots,d\}$ let $\{(l,i,x),
    (l,j,y)\}$ be an element of~$F$, that is, let it be an edge
    of~$G$.
  \item To ensure the fourth property ($S(l,z) \cap S(m,z)
    =\emptyset$ must hold when $l^z$ and $m^z$ have the same parent),
    for each $l,m \in \operatorname{leafs}(T_L^k)$ 
    and the smallest number $z$ with $l^z \neq m^z$ and all $x,y \in
    \{1,\dots,d\}$ let $\{(l,z,x), (m,z,y)\}$ be an element of~$F$.
  \end{enumerate}

  With this definition, we claim that $C$ is a $k$-pseudo-core of
  level~$L$ of $H$ if, and only if, $(H',G)$ is a element of
  $\PLang[G]{restricted-coloring}$. If we can show this, we are done by
  Lemma~\ref{lemma-restricted-cc}.

  We need to prove two directions. First, let a
  $T_L^K$-pseudo-sunflower~$S$ of $H$ with core $C$ be given. We must
  argue that there is a proper coloring $c \colon U \to V$ of $G$ with
  $c[U_l] \in E(H')$ for all leafs of $T_L^k$. This coloring is the
  following: Consider all leafs $l$ and all numbers $i \in
  \{1,\dots,L\}$. For each pair, the set $S(l,i)$ consist of some
  vertices $v_1,\dots,v_p \in V$ for some $p = |S(l,i)| \in \{1,\dots,d\}$. We set 
  $c(l,i,x) = v_x$ for $x \in \{1,\dots,p\}$ and $c(l,i,x) = v_p$  (or
  any other element of $S(l,i)$, it does not matter) for
  $x \in \{p+1,\dots,d\}$.

  With this definition, we clearly have $c[U_l] = S(l,1) \cup \dots
  \cup S(l,L)$ and since $S(l) \in E$, the latter is an element of
  $E(H')$. Furthermore, $c$ is a proper coloring: For all edges
  $\{(l,i,x), (l,j,y)\} \in F$ we know that the colors $c(l,i,x)$ and
  $c(l,j,y)$ are different since $c(l,i,x) \in S(l,i)$ and $c(l,j,y)
  \in S(l,j)$ and $S(l,i) \cap S(l,j) = \emptyset$. Next, for the
  edges of the form $\{(l,z,x), (m,z,y)\} \in F$ we also have that
  $c(l,z,x)$ and $c(m,z,y)$ are different since $S(l,z)$ and $S(m,z)$
  are disjoint.

  For the other direction, let a coloring $c$ be given. Define a
  mapping $S$ from $\operatorname{leafs}(T_L^k) \times \{0,\dots,L\}$
  to subsets of $V$ as follows: For all $l \in
  \operatorname{leafs}(T_l^k)$ let $S(l,0) = C$ and for
  $i\in\{1,\dots,L\}$ let $S(l,i) =
  \{c(l,i,1),c(l,i,2),\dots,c(l,i,d)\}$.

  To see that $S$ has the properties of a pseudo-sunflower, consider
  the  four properties. The first property is clearly true by
  definition. The second follows from $c[U_i] \in E(H')$ and, hence $C 
  \cup c[U_i] \in E(H)$. The third item follows from the following
  fact: For any two vertices $v_x \in S(l,i)$ and $v_y \in S(l,j)$ for
  $i\neq j$, there is an edge between $(l,i,x)$ and $(l,j,y)$ in~$G$
  and, thus, $v_x \neq v_y$. This shows that $S(l,i) \cap S(l,j) =
  \emptyset$ must hold; and note that, clearly, $S(l,i) \neq
  \emptyset$ always holds. For the fourth item, we have
  $S(l,z) \cap S(m,z) = \emptyset$ since for all $v_x \in S(l,z)$ and
  $v_y \in S(m,z)$ there is an edge between $(l,z,x)$ and $(m,z,y)$
  in~$G$.
\end{proof}
Theorem~\ref{thm-ac0-pseudo} now implies Theorem~\ref{theorem-main} by
simple standard arguments:

\begin{proof}[Proof of Theorem~\ref{theorem-main}]
  The only difference between the above claim and the claim of
  Theorem~\ref{thm-ac0-pseudo} (apart
  from the exact formulation) is that Theorem~\ref{theorem-main}
  requires the $\Class{AC}^0$-circuit family to have size
  $|V|^c|E|^c$ for some constant~$c$, while Theorem~\ref{thm-ac0-pseudo} allows it
  to have size $f(k,d) |V|^c |E|^c$. To reduce the size, on input
  $(H,k)$, a kernelization algorithm for Theorem~\ref{theorem-main}
  first checks  whether we have $f(k,d) > |V|^c|E|^c$ and, if so, just
  outputs $(H,k)$; otherwise it runs the kernelization algorithm from
  Theorem~\ref{thm-ac0-pseudo}, which needs size $f(k,d) |V|^c |E|^c \le
  |V|^{2c} |E|^{2c}$. 
\end{proof}

\section{Conclusion}

The results of this paper can be summarized as
$\PLang[k,d]{hitting-set} \in \Para\Class{AC}^0$ or, equivalently,
that kernels for the hitting set problem parameterized by $k$ and $d$
can be computed by a single $\Class{AC}^0$-circuit family. This result
refutes a conjecture of Chen et al.~\cite{ChenFH2017}. The proof
introduced a new technique: Iterated applications of color
coding can sometimes be ``collapsed'' into a single
application. This collapsing is not always straightforward (as the
present paper showed) and additional technical machinery may be needed
to make it work. 

The proof of our main result would be \emph{much} simpler if the
number of $k$-cores of a hypergraph depended only on the parameters
$k$ and~$d$ (since, then, only one round would be needed in the
parallel algorithm). While we gave examples that refute this hope, it
may be possible to tweak the idea a bit: We can 
compute in constant parallel time  the set of all
\emph{inclusion-minimal $k$-cores} of a hypergraph. We believe that we
can prove  that the number of these inclusion-minimal $k$-cores
depends only on $k$ and~$d$ (unfortunately, we need rather involved
and technical combinatorics and the dependence on $k$ and~$d$ seems to
be ``quite bad''). Nevertheless, if this is the case, we get a different 
proof that $\PLang[k,d]{hitting-set}$ has an $\Class{AC}^0$-kernelization,
where the complexity of proving correctness is shifted away from the
algorithm (which gets much simpler) towards the underlying graph
theory and combinatorics.


\bibliography{main}

\end{document}